\newcommand{\blind}{0}
\newtheorem{assumption}{Assumption}
\newtheorem{theorem}{Theorem}
\newtheorem{lemma}{Lemma}
\newcommand{\+}{\mathbf}
\newcommand{\bs}{\boldsymbol}
\newcommand{\E}{\mathbb{E}}
\renewcommand{\circ}{ \odot}
\begin{document}

\def\spacingset#1{\renewcommand{\baselinestretch}%
{#1}\small\normalsize} \spacingset{1}


\if0\blind
{
  \title{\bf Accelerated inference for stochastic compartmental models with over-dispersed 
 partial observations}
  \author{Michael Whitehouse\hspace{.2cm}\\
    School of Public Health, Imperial College}
  \maketitle
} \fi

\if1\blind
{
  \bigskip
  \bigskip
  \bigskip
  \begin{center}
    {\LARGE\bf Title}
\end{center}
  \medskip
} \fi

\bigskip
\begin{abstract}
An assumed density approximate likelihood is derived for a class of partially observed stochastic compartmental models which permit observational over-dispersion. This is achieved by treating time-varying reporting probabilities as latent variables and integrating them out using Laplace approximations within Poisson Approximate Likelihoods (LawPAL), resulting in a fast deterministic approximation to the marginal likelihood and filtering distributions. We derive an asymptotically exact filtering result in the large population regime, demonstrating the approximation's ability to recover latent disease states and reporting probabilities. Through simulations we: 1) demonstrate favorable behavior of the maximum approximate likelihood estimator in the large population and time horizon regime in terms of ground truth recovery; 2) demonstrate order of magnitude computational speed gains over a sequential Monte Carlo likelihood based approach and explore the statistical compromises our approximation implicitly makes. We conclude by embedding our methodology within the probabilistic programming language Stan for automated Bayesian inference to develop a model of practical interest using data from the Covid-19 outbreak in Switzerland.
\end{abstract}

\noindent%
{\small {\it Keywords:}  Stochastic Compartmental models, Over-dispersion, Approximate Inference, Epidemiology.}

\spacingset{1.45}
\section{Introduction}
\label{sec:intro}

Compartmental models, initially developed in the seminal works of \cite{m1925applications, kermack1927contribution}, capture disease outbreak dynamics by dividing a population into distinct disease states we call `compartments'. One then defines a transmission model which describes the rate at which individuals transition between compartments through time - in turn, this implicitly characterizes the progression of the disease through the population. In practice this process is noisily observed through testing data, usually cast as random under-reporting of either incidence (\textit{newly} infected individuals) or prevalence (\textit{currently} infected individuals) counts.

Such models can be cast in myriad ways, deterministic, stochastic, continuous, or discrete time, finite population counts, or continuous proportions. The link between the Markov jump process formulation and its corresponding large population limiting deterministic ODE system was formalized by \cite{kurtz1970solutions}. When coupled with a noisy observation model, likelihoods for ODE transmission models are relatively easy to compute using an ODE solver. This has motivated widespread adoption of the ODE approach in practice. Such an approach, however, ignores the inherent stochasticity governing disease transmission \citep{king2015avoidable}.

A complication introduced when using a stochastic transmission model is the need to marginalize over the latent transmission process, which subsequently renders the likelihood intractable. Many simulation based solutions to this issue have been proposed in the literature, approaches include: sequential Monte Carlo (SMC) \citep{corbella2022lifebelt, koepke2016predictive,wheeler2024informing}, Approximate Bayesian Computation \citep{mckinley2009inference, kypraios2017tutorial}, Data Augmentation MCMC \citep{fintzi2017efficient,lekone2006statistical,nguyen2021stochastic}. In principle these methods only require the ability to simulate from the model, through in practice they rely on intricate tuning, vast computational power for many repeated simulations, and the construction of sophisticated proposal distributions or summary statistics \citep{rimella2022approximating,prangle2018summary}. One deterministic approach is the linear noise approximation (LNA) \citep{golightly2023accelerating} which consists of Gaussian approximations derived from the functional central limit theorem associated with \cite{kurtz1970solutions}.

The recently developed suite of assumed density approximate likelihoods (ADAL) \citep{whiteley2021inference, whitehouse2023consistent,rimella2025scalable} compose a family of computationally efficient methods for inference in discrete time stochastic compartmental models which exhibit favorable theoretical properties. ADALs are obtained by marginalizing over the latent compartmental model through a combination of recursive expectation propagation \citep{minka2001family} and assumed density \citep{sorenson1968non} steps to arrive at deterministic approximations to filtering distributions and the marginal likelihood. In their vanilla form they can be used to fit models with Binomial under-reporting \citep{whiteley2021inference}, this is an equi-dispersed observation model in the sense that the variance of a binomial random variable is strictly of the same order as its mean. Such assumptions can lead to problematic inference when confronted with data exhibiting over-dispersion, i.e. an under-reporting distribution with variance significantly larger than its mean, as explored in Section \ref{sec:covid} and noted in previous studies \citep{whitehouse2023consistent,stocks2020model}. The use of ADALs can be extended to over-dispersed observation mechanisms by treating reporting probabilities as time varying random variables, see Section \ref{sec:obsmod} for details, and integrating them out by embedding ADALs in an SMC sampler, incurring significant computational cost \citep{whitehouse2023consistent}. In this paper we extend the remit of ADALs to models permitting observational over-dispersion without resorting to sampling methods. This is achieved by recursively using Laplace approximations to integrate out the latent reporting rates, resulting in a novel deterministic and computationally simple marginal likelihood approximation.
We introduce the class of latent compartmental models considered by this paper in Section \ref{sec:model}. We derive and present an algorithm to compute our approximate filtering distributions and likelihood in Section \ref{sec:recursions}. In Section \ref{sec:theory} we present a result on the asymptotic behavior of our filtering approximations in the large population limit. Through simulations in Section \ref{sec:examples} we explore properties of the maximum approximate likelihood estimator in terms of ground truth recovery and benchmark our approximation against an SMC approach. To conclude we perform a comparison of observation mechanisms by embedding our methodology within a probabilistic programming language (PPL) to facilitate automated Bayesian inference.
\vspace{-0.5cm}
\section{Model}
\label{sec:model}
\subsection{Notation}
The set of natural numbers, including $0$, is denoted $\mathbb{N}_0$. The set of non-negative real numbers is denoted $\mathbb{R}_{\geq 0}$. For an integer $m\geq 1$, $[m]\coloneqq\{1,\ldots,m\}$. Matrices and vectors are denoted by bold upper-case and bold lower-case letters, respectively, e.g., $\+{A}$ and $\+{b}$, with non-bold upper-case and lower case used for their respective elements $A^{(i,j)}$, $b^{(i)}$. All vectors are column vectors unless stated otherwise. We use $\+1_m$ to denote the vector of $m$ $1$'s. The indicator function is denoted $\mathbb{I}[\cdot]$. For $\+ x \in \mathbb{N}_0^m$ we define $\bs \eta(\+x) = [x^{(1)}/\+1_m^\top \+x \,\cdots\, x^{(m)}/\+1_m^\top \+x]^\top$ if $\+1_m^\top\+x>0$, i.e. $\bs \eta(\+x)$ normalizes $\+x$ to yield a probability vector;  and $\bs \eta(\+x) = \bs 0_m$ if $\+1_m^\top\+x=0$.


For $\+x\in \mathbb{N}_0^m$ and $\bs \lambda \in \mathbb{R}_{\geq 0}^m$ we write $\+x \sim \mathrm{Pois}(\bs \lambda)$ to denote that
the elements of $\+x$ are independent and element $x^{(i)}$ is Poisson distributed with parameter $\lambda^{(i)}$. We shall say that such a
 random vector $\+x$ has a ``vector-Poisson distribution''. For a probability vector $\bs
 \pi$  we write $\mathrm{Mult}(n,\bs\pi)$ for the associated multinomial distribution.
Similarly, for a random matrix $\+X \in \mathbb{N}_0^{m \times l}$ and a matrix $\bs \Lambda \in \mathbb{R}_{\geq 0}^{m \times l}$, we write $\+X\sim \mathrm{Pois}(\bs \Lambda) $ when the elements of $\+X$ are independent with $X^{(i,j)}$ being Poisson distributed with parameter $\Lambda^{(i,j)}$.
We call $\bs \lambda$ (resp. $\bs\Lambda$) the intensity vector (resp. matrix). For $a,b,\mu\in \mathbb{R}$, $a<b$ and $\sigma^2>0$ we write $\mathcal{N}_{[a,b]}(\mu,\sigma^2)$ to denote a normal random variable with mean and variance parameters $\mu$ and $\sigma^2$ which is restricted to the support $[a,b]$.

\subsection{Latent compartmental model} \label{sec:lat_comp_model}
The transmission model we consider for the remainder of the paper is defined by: $m$, the number of compartments; $n$ the initial population size; a length-$m$ probability vector $\bs\pi_0$ and for each $t \geq 0$  a mapping from length-$m$ probability vectors to size-$m \times m$ row-stochastic matrices, $\boldsymbol{\eta} \mapsto \mathbf{K}_{t, \boldsymbol{\eta}}$. The population at time $t \in \mathbb{N}_0$ is a set of $n$ of random variables $\{\xi_{t}^{(1)}, \ldots, \xi_{t}^{(n)}\}$, each valued in $[m]$.  The counts of individuals in each of the $m$ compartments at time $t$ are
collected in $\mathbf{x}_{t}=[x_{t}^{(1)} \cdots x_{t}^{(m)}]^\top$, where $x_{t}^{(i)}=\sum_{j=1}^{n} \mathbb{I}[\xi_{t}^{(j)}=i].$ The population is initialized as a draw ${\+x}_0 \sim \mathrm{Mult}(n, \bs \pi_0)$. The members of the population are exchangeable, labeled by, e.g., a uniformly random assignment of indices $\{\xi_{0}^{(1)}, \ldots, \xi_{0}^{(n)}\}$ subject to $x_{0}^{(j)}:=\sum_{i=1}^{n} \mathbb{I}[\xi_{0}^{(i)}=j].$ For $t\geq 1$, given $\{\xi_{t-1}^{(1)}, \ldots, \xi_{t-1}^{(n)}\}$, the $\{\xi_{t}^{(1)}, \ldots, \xi_{t}^{(n)}\}$ are conditionally independent such that for each $i$, $\xi_{t}^{(i)}$ is drawn from the $\xi_{t-1}^{(i)}$'th row of  $\mathbf{K}_{t, \boldsymbol{\eta}({\+ x}_{t-1})}$. Moreover, let $\mathbf{Z}_{t}$ be an $m \times m$ matrix with elements $Z_{t}^{(i, j)}:=\sum_{k=1}^{n} \mathbb{I}[\xi_{t-1}^{(k)}=i, \xi_{t}^{(k)}=j],$ which counts the individuals transitioning from
compartment $i$ at $t-1$ to $j$ at time $t$. It should be understood that the way in which $K_{\bs \eta}$ depends on $\bs \eta$ encodes how transition rates depend on the current population disease states (e.g. prevalence), we illustrate this now with an example.



\subsubsection{SEIR example} \label{example:SEIR}
A popular model used in practice \citep{lekone2006statistical,rawson2025mathematical} is the discrete time stochastic susceptible exposed infected removed ($SEIR$) model. The dynamics of susceptible $(S_t)$, exposed $(E_t)$, infected $(I_t)$, and  removed $(R_t)$ counts are described by
\begin{equation*}
S_{t+1} = S_{t} - B_{t},\quad E_{t+1} = E_{t} + B_{t} - C_{t},\quad I_{t+1} = I_{t} + C_{t} - D_{t}, \qquad R_{t+1} = R_{t} + D_{t}.
\end{equation*}
With conditionally independent, binomially distributed random increment variables:
\begin{equation*}
B_t \sim \mathrm{Bin}(S_t, 1- e^{-h\beta\frac{I_t}{n}}), \quad C_t \sim \mathrm{Bin}(E_t, 1- e^{-h\rho}),\quad D_t \sim \mathrm{Bin}(I_t, 1- e^{-h\gamma}),
\end{equation*}
where $h>0$ is a time-step size (set $h=1$ unless stated otherwise) and $\beta, \rho, \gamma$ are model parameters. The model is initialized with non negative integers in each compartment as a sample $[S_0,E_0,I_0,R_0]^\top \sim\mathrm{Mult}(n,\bs\pi_0)$ for some length-$4$ probability vector $\bs\pi_0$. We can interpret $\beta$ as the rate at which interactions between susceptible and infected individuals results in an infection. The mean exposure and infection periods are given by $1/\rho$ and $1/\gamma$ respectively and the reproduction number is given by $\beta/\gamma$. This model can be cast as an instance of the model from Section \ref{sec:lat_comp_model} by taking $m=4$, identifying $\+{x}_t\equiv[S_t\;E_t\;I_t\;R_t]^\top$ and:
\begin{equation}\label{eq:SEIRexample}
\mathbf{K}_{t, \boldsymbol{\eta}}=\left[\begin{array}{cccc}
e^{-h \beta \eta^{(3)}} & 1-e^{-h \beta \eta^{(3)}} & 0 &0\\
0 &  e^{-h \rho} & 1-e^{-h \rho} &0\\
0 &  0&e^{-h \gamma} & 1-e^{-h \gamma}\\
0 & 0  &0 &1
\end{array}\right].
\end{equation}
We can then further identify the time $t$ transitions $B_t\coloneqq Z_t^{(1,2)}$, $C_t\coloneqq Z_t^{(2,3)}$, and $D_t\coloneqq Z_t^{(3,4)}$.

\subsection{Over-dispersed observation model of disease incidence} \label{sec:obsmod}
In this article we neglect observation models derived from disease prevalence, the total number currently infected in a population at a given time, since these data are rarely available in practice. In our notation this corresponds to observations derived from the process $\{\+x_t\}_{t\geq 0}$. Instead, we focus on observations derived from disease incidence, the number of \textit{newly} infected individuals in a given time window - corresponding to observations derived from the process $\{\+Z_t\}_{t\geq 1}$. In particular, the number of individuals transitioning between between two compartments $i,j \in [m]$ at time $t\geq 1$ is modeled by $Z_t^{(i,j)}$. For pedagogical reasons we consider here incidence corresponding to a single pair of compartments $i$ and $j$, i.e. observations are a univariate time series $y_{1},y_2,\dots$, though we note that the methodology could be extended to higher dimensional problems, such as age-stratified incidence or geographical meta-population models incorporating immigration and emigration \citep{ionides2022iterated, andrade2020evaluation}, we discuss this further in Section \ref{sec:discussion}. The observation at time $t$, denoted $y_t$, has  distribution given by the following hierarchical model.
\begin{align}
    q_t &\sim f(\cdot \mid \varphi), \label{eq:qprior}\\
    y_t &\sim \text{Bin}(Z_t^{(i,j)},q_t),
\end{align}
where $f(\cdot\mid\varphi)$ is a distribution with support $[0,1]$ and $\varphi$ is a model parameter. If $f(\cdot | \varphi)$ is a Dirac mass at some $\mu_q \in [0,1]$ then we recover an equi-dispersed binomial observation model \citep{whiteley2021inference,fintzi2017efficient}. Considering a general $f$ gives rise to over-dispersed models for $y_t$ following when $q_t$ is marginalized out. For the remainder of the paper we follow \citep{whitehouse2023consistent} and take $f$ to be a Gaussian distribution truncated to the interval $(0,1)$ with parameter $\varphi \coloneqq [\mu_q ,\sigma_q^2]$, where the components denote mean and variance parameters, respectively. In Section \ref{sec:recursions} we show that with this model structure it is possible to derive an analytically tractable assumed density approximate likelihood.
\vspace{-0.5cm}
\section{Approximate filtering and marginal likelihood}\label{sec:recursions}

This section presents methodology for performing inference on the model described in Section \ref{sec:model}; the proposed approximations fall under the umbrella of assumed density approximate likelihood (ADAL) methods \citep{whiteley2021inference,rimella2025scalable,whitehouse2023consistent}, extending previous advances to handle over-dispersed observations.



\subsection{Model likelihood}
The processes $\{(\+Z_t,q_t)\}_{t\geq1}$ and $\{y_t\}_{t\geq1}$ can be cast as a hidden Markov model (HMM), with the former constituting the latent hidden state and latter the observed. Given a finite time horizon $T\in \mathbb{N}$ the exact likelihood is given by
\begin{equation}
        p(y_{1:T}) = \sum_{\+x_0}p(\+x_0)\sum_{\+Z_{1:T}}\int_{(0,1)^{T}}\prod_{t=1}^T p(y_t|\+Z_t,q_t) p(\+Z_t|\+Z_{t-1})p(q_t)dq_{1:T},
\end{equation}
which has prohibitive computational cost for even moderate population size $n$, compartment number $m$ and time horizon $T$ \citep{whiteley2021inference}. The forward algorithm reduces this cost to linear in $T$ using the so-called prediction and update steps, in principle, given $p(\+x_0)$ one could compute
\begin{align}
    p(\+Z_t| y_{1:t-1}) &= \sum_{\+Z_{t-1}}p(\+Z_t|\+Z_{t-1})p(\+Z_{t-1}|y_{1:t-1}), \label{eq:predZ} \\
    p(y_t|y_{1:t-1}) &= \sum_{\+Z_t} \int p(y_t| \+Z_t,q_t)p(\+Z_t,q_t| y_{1:t-1})d{q_t}, \label{eq:likinc} \\
    p(q_t | y_{1:t}) &= \frac{p(y_t|q_t)p(q_t|y_{1:t-1})}{p(y_t|y_{1:t-1})}, \label{eq:qupdate}\\
    p(\+Z_t| y_{1:t}) &= \frac{p(y_t|\+Z_t)p(\+Z_t|y_{1:t-1})}{p(y_t|y_{1:t-1})}. \label{eq:Zuptade}\\
\end{align}
However, equations \eqref{eq:predZ}-\eqref{eq:Zuptade} are still not available in closed form due to the intractable marginalization steps over $\+Z_{t-1},\+Z_t,$ $q_t$, and must be approximated.
\subsection{Approximation derivation}
Our strategy to approximate \eqref{eq:predZ}-\eqref{eq:Zuptade} is outlined by the following approximation steps:
\begin{enumerate}
    \item Given a matrix Poisson approximation $\mathrm{Poi}(\bar{\+\Lambda}_{t-1})$ of $p(\+Z_{t-1}|y_{1:t-1})$ for some $m\times m$ matrix $\bar{\+\Lambda}_{t-1}$, derive an approximation $\mathrm{Poi}({\+\Lambda}_{t})$ of $p(\+Z_t| y_{1:t-1})$ for some $m\times m$ matrix ${\+\Lambda}_{t}$.
    \item Perform a Laplace approximation to the intractable calculations \eqref{eq:likinc} and \eqref{eq:qupdate} to approximate $p(y_t|y_{1:t-1})$ and $p(q_t | y_{1:t})$.
    \item Calculate an approximation of the first moment  $\mathbb{E}[\+Z_t| y_{1:t}]$ and use this to define an approximation $\mathrm{Poi}(\bar{\+\Lambda}_{t})$ of $p(\+Z_t| y_{1:t})$ for some $m\times m$ matrix $\bar{\+\Lambda}_{t}$.
\end{enumerate}
These steps define a recursive strategy to approximate \eqref{eq:predZ}-\eqref{eq:Zuptade}. Step 1 follows the same reasoning as the prediction step associated with the Poisson Approximate Likelihood (PAL) procedure of \cite{whitehouse2023consistent}. The recipe for steps 2 and 3, along with supporting theory and experiments in Sections \ref{sec:theory} and \ref{sec:examples}, constitute the novel contributions of this paper.
\paragraph{Approximation step 1.}
Step 1 can be formalized with the following Lemma, which was first presented in \cite{whitehouse2023consistent}. For $\+Z\in\mathbb{N}_0^{m\times m}$ and a length-$m$ probability vector $\bs \eta$, let $\bar{M}_t(\+Z, \bs \eta, \cdot)$ denote the probability mass function of a random $m \times m$ matrix, say $\tilde{\+ Z}$, such that $\+1_m^\top \+ Z = (\tilde{\+ Z} \+1_m )^\top$ with probability $1$ where, given the row sums $\tilde{\+Z}\+1_m = \+ x$, the rows of $\tilde{\+Z}$ are conditionally independent with the conditional distribution of the $i^{th}$ row being $\mathrm{Mult}(x^{(i)}, \+K_{t,\bs \eta}^{(i,\cdot)})$. That is, by construction, we have that $\bar{M}_t(\+Z_{t-1}, \bs \eta(\+1_m^\top \+Z_{t-1}), \+Z_t)$ is equal to $p(\+Z_{t}|\+Z_{t-1})$ as per our latent compartmental model.

\begin{lemma}\label{lem:Zpred}
If for a given $m \times m$ matrix $\+\Lambda$, $\bar \mu$ is the probability mass function associated with $\mathrm{Pois}(\+ \Lambda)$ and $\E_{\bar \mu}[ \+1_m^\top\+Z]$ is the expected value of $\+1^\top_m \+Z$ where $\+Z \sim \bar \mu$, then we have that $\sum_{\+Z\in \mathbb{N}_{0}^{m\times m}}\bar \mu(\+Z) \bar M_t(\+Z,\bs \eta(\E_{\bar \mu}[\+1_m^\top\+Z]),\cdot)$ is the probability mass function associated with $\mathrm{Pois}((\bs \lambda \otimes \+1_m)\circ \+K_{t,\bs \eta(\bs \lambda)})$, where $\bs \lambda^\top \coloneqq \+ 1_m^\top \bs \Lambda$.
\end{lemma}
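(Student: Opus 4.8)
The plan is to exploit two structural facts: that under $\bar\mu = \mathrm{Pois}(\+\Lambda)$ the kernel argument $\bs\eta(\E_{\bar\mu}[\+1_m^\top\+Z])$ is a deterministic vector, so the mixing kernel is fixed throughout the sum; and that the resulting mixture is a Poissonization of a multinomial splitting, to which the classical thinning property applies row by row. I would begin by pinning down the kernel argument: since $\E_{\bar\mu}[Z^{(i,j)}] = \Lambda^{(i,j)}$, the $j$-th entry of $\E_{\bar\mu}[\+1_m^\top\+Z]$ equals $\sum_i \Lambda^{(i,j)} = (\+1_m^\top\+\Lambda)^{(j)} = \lambda^{(j)}$, so $\E_{\bar\mu}[\+1_m^\top\+Z] = \bs\lambda^\top$ and hence $\bs\eta(\E_{\bar\mu}[\+1_m^\top\+Z]) = \bs\eta(\bs\lambda)$. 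The mixing kernel is therefore the fixed matrix $\+K \coloneqq \+K_{t,\bs\eta(\bs\lambda)}$; this freezing of the state-dependence at its Poisson mean is exactly what keeps the recursion closed within the Poisson family and is worth stressing at the outset.

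Next I would reduce the mixture to the only marginal that matters. By construction $\bar M_t(\+Z,\bs\eta(\bs\lambda),\cdot)$ depends on its first argument $\+Z$ only through the column-sum vector $\+c \coloneqq (\+1_m^\top\+Z)^\top$, which serves as the row-sum vector of $\tilde{\+Z}$. Because $\+Z\sim\mathrm{Pois}(\+\Lambda)$ has independent entries, its column sums $c^{(1)},\dots,c^{(m)}$ are mutually independent with $c^{(i)}\sim\mathrm{Pois}(\lambda^{(i)})$, being sums of independent Poissons. The mixture thus collapses to the following generative description of $\tilde{\+Z}$: draw independent Poisson row totals $c^{(i)}\sim\mathrm{Pois}(\lambda^{(i)})$ and, independently across $i$, split row $i$ as $\mathrm{Mult}(c^{(i)},\+K^{(i,\cdot)})$.

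The crux is the thinning step. I would invoke the standard Poissonization identity: if $N\sim\mathrm{Pois}(\lambda^{(i)})$ and, given $N$, the counts $(\tilde Z^{(i,1)},\dots,\tilde Z^{(i,m)})\sim\mathrm{Mult}(N,\+K^{(i,\cdot)})$, then the $\tilde Z^{(i,j)}$ are independent with $\tilde Z^{(i,j)}\sim\mathrm{Pois}(\lambda^{(i)} K^{(i,j)})$; this is most transparently verified by factorizing the joint probability generating function as $\exp\{\lambda^{(i)}(\sum_j K^{(i,j)} s_j - 1)\} = \prod_j \exp\{\lambda^{(i)}K^{(i,j)}(s_j-1)\}$. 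Applying this within each row and combining with independence across rows shows that all entries $\tilde Z^{(i,j)}$ are mutually independent and Poisson with mean $\lambda^{(i)} K_{t,\bs\eta(\bs\lambda)}^{(i,j)}$, i.e. $\tilde{\+Z}\sim\mathrm{Pois}\big((\bs\lambda\otimes\+1_m)\circ\+K_{t,\bs\eta(\bs\lambda)}\big)$, which is the asserted probability mass function. I expect the only genuine bookkeeping obstacle to be verifying that the Poisson-total-with-multinomial-split independence survives being assembled across all $m$ rows at once; once the column-sum reduction and the per-row thinning are in hand this follows immediately from the mutual independence of the row totals $c^{(i)}$.
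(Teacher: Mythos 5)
Your proof is correct and follows essentially the same route as the source the paper cites for this lemma (the argument in \cite{whitehouse2023consistent}): reduce the mixture to the independent Poisson column sums of $\+Z$, note the kernel argument is frozen at the deterministic value $\bs\eta(\bs\lambda)$, and apply the Poisson--multinomial thinning/splitting identity row by row. No gaps.
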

Using Lemma \ref{lem:Zpred} we arrive the following recipe for approximation step 1. Given a matrix Poisson approximation $\mathrm{Poi}(\bar{\+\Lambda}_{t-1})$ of $p(\+Z_{t-1}|y_{1:t-1})$ for some $m\times m$ matrix $\bar{\+\Lambda}_{t-1}$, we approximate $p(\+Z_t| y_{1:t-1})$ with $\mathrm{Poi}({\+\Lambda}_{t})$ where $\+\Lambda_{t} = (\bar{\bs \lambda}_{t-1} \otimes \+1_m)\circ \+K_{t,\bs \eta(\bar{\bs \lambda}_{t-1})}$ where $\bar{\bs \lambda}_{t-1}^\top\coloneqq \+1_m^\top\bar{\+\Lambda}_{t-1}$.
\paragraph{Approximation step 2: marginal likelihood.}

 Given our approximation $\mathrm{Poi}({\+\Lambda}_{t})$ of $p(\+Z_t| y_{1:t-1})$ we wish to perform a Laplace approximation to equations \eqref{eq:likinc} and \eqref{eq:qupdate}, we now outline this recipe. Once again we distill the problem to suppress dependence on $t$ and $y_{1:{t-1}}$: for some $m\times m$ matrix $\+\Lambda$, $(i,j)\in [m]^2$, $\mu_q \in (0,1)$, and $\sigma_q^2>0$, let $\+Z \sim \mathrm{Poi}(\+\Lambda)$, $q \sim \mathcal{N}_{[0,1]}(\mu_q , \sigma_q^2)$, and $y \sim \mathrm{Binom}(Z^{(i,j)},q)$. One can use standard results of Poisson processes \citep{kingman1992poisson} to notice that, given $q$, $y$ is a marked Poisson process and we can write $y|q\sim \mathrm{Poi}(q\Lambda^{(i,j)})$. The standard recipe for a Laplace approximation, see e.g.  \cite{murphy2012probabilistic} Section 8.4.1, to $p(q|y)$ and $p(y)$ is given by
\begin{align}
    p(q\mid y) &\overset{\text{\tiny {\textit{law}}}}{\approx } \mathcal{N}_{[0,1]}(\bar q, s^2), \label{eq:qfilt}\\
    \log p(y) &\approx \log p(y, \bar q)) + \log(s\sqrt{2\pi}), \label{eq:liky}
\end{align}
where $\bar q := \arg\max_qp(y,q)$ and $s^2 := \left(-\frac{\partial^2}{\partial q^2}\log p(y,q)|_{q=\bar q} \right)^{-1}$ where the support in \eqref{eq:qfilt} has further been restricted to the unit interval $[0,1]$ in line with the prior support on $q$. It turns out both $\bar q$ and $s^2$ can be calculated analytically: for some constant $C_y $, which does not depend on $q$, write
\begin{equation}
    \log p(y,q)=\log \left\{p(y|q)p(q) \right\} = y\log(q\Lambda^{(i,j)}) - q\Lambda^{(i,j)} - \frac{1}{2}\left(\frac{q-\mu_q}{\sigma^2_q} \right)^2+ C_y.
\end{equation}
Differentiating and setting to zero gives a quadratic equation we can solve
\begin{align}
    \frac{\partial}{\partial q}\log p(y,q) = \frac{y}{q} - \Lambda^{(i,j)} - \frac{q-\mu_q}{\sigma^2_q} &= 0, \\
   \iff q^2 +( \Lambda^{(i,j)}\sigma^2_q - \mu_q)q - y\sigma^2_q &= 0,
\end{align}
which gives $\arg\max_{q}p(y,q)= \frac{1}{2}\left( \mu_q - \Lambda ^{(i,j)}\sigma^2_q + \sqrt{\left( \Lambda ^{(i,j)}\sigma^2_q  - \mu_q \right)^2 - 4y\sigma^2_q }\right),$ which we can in turn use to deduce  $\left(-\frac{\partial^2}{\partial q^2}\log p(y,q)|_{q=\bar q} \right)^{-1} =  \left( \frac{y}{\bar{q}^2}+ \frac{1}{\sigma_q^2}\right)^{-1}$, where we take the convention $0/0=0$. If we let $\phi(\cdot)$ be the probability density function associated with a $\mathcal{N}_{[0,1]}(\mu_q , \sigma_q^2)$ random variable then we can write \eqref{eq:liky} as
\begin{align}
     \log p(y) &\approx \log \left\{p(y|q)p(q) \right\} + \log(s\sqrt{2\pi})\\
     &=y\log(\bar q\Lambda^{(i,j)}) - \bar q\Lambda^{(i,j)} - y! + \phi(\bar q) + \log(s\sqrt{2\pi}). \label{eq:marglik}
\end{align}
Our recipe for approximation step 2 is thus as follows. Given an approximation $\mathrm{Poi}({\+\Lambda}_{t})$ of $p(\+Z_t| y_{1:t-1})$, we define $\bar{q}_t \coloneqq \frac{1}{2}\left( \mu_q - \Lambda_t ^{(i,j)}\sigma^2_q + \sqrt{\left( \Lambda_t ^{(i,j)}\sigma^2_q  - \mu_q \right)^2 - 4y_t\sigma^2_q }\right)$ and $s^2_t \leftarrow \left( \frac{y_t}{\bar{q}_t}+ \frac{1}{\sigma_q^2}\right)^{-1}$ to arrive at the approximation:

\begin{equation}
   \log p(y_t|y_{1:t-1}) \approx \ell(y_t|y_{1:t-1}) \coloneqq - \Lambda_t^{(i,j)} \bar{q}_t+ + y_t \log(\Lambda_t^{(i,j)} \bar{q}_t) - y_t! + \phi\left(\bar{q}_t\right) +\frac{1}{2}\log(s_t2\pi)
\end{equation}

\paragraph{Approximation step 3: update step.}
To complete our recursions we move on to approximation step 3 in which we use a moment matching strategy to approximate \eqref{eq:Zuptade} with an assumed matrix Poisson distribution. For any $(k,l)\neq (i,j)$ we have $\mathbb{E}\left[Z^{(k,l)}|y\right]=\mathbb{E}\left[Z^{(k,l)}\right]=\Lambda^{(k,l)}$. Considering now $Z^{(i,j)}$, we have by Lemma 4 of \cite{whitehouse2023consistent} that
\begin{equation}
    \mathbb{E}\left[Z^{(i,j)}|q,y\right] = y + (1-q)\Lambda^{(i,j)}.
\end{equation}
Then by the tower rule we have that
\begin{equation}
    \mathbb{E}\left[Z^{(i,j)}|y\right] =\mathbb{E}\left[\mathbb{E}\left[Z^{(i,j)}|q,y\right]|y\right] = y + (1-\mathbb{E}\left[q|y\right])\Lambda^{(i,j)}.
\end{equation}
We then approximate $\mathbb{E}\left[q|y\right]$ using \eqref{eq:qfilt}. We make the further approximation that $\mathbb{E}\left[q|y\right] \approx \bar q$, this error becomes negligible when the variance of \eqref{eq:qfilt} is small and can be justified both theoretically and experimentally as per results presented in Sections \ref{sec:theory} and \ref{sec:examples}.  Defining the $m \times m$ matrix $\bar {\+\Lambda}$ element-wise as $\bar {\Lambda}^{(k,l)} := {\Lambda}^{(k,l)}$  for  $(k,l)\neq (i,j)$ and $\bar {\Lambda}^{(i,j)} := y + (1-\bar q){\Lambda}^{(i,j)}$ we have that $\mathbb{E}\left[\+Z|y\right] \approx \bar{\+\Lambda}$, leading us to the moment matching 
 approximation of $p(\+Z|y)$ with $\mathrm{Poi}(\bar{\+\Lambda})$. Hence our recipe for approximation step 3 is as follows. Given the definitions of $\+\Lambda_{t}$ and $\bar q_t$ from steps 1 and 2, we approximate $p(\+Z_t| y_{1:t})$ with $\mathrm{Poi}(\bar{\+\Lambda}_{t})$ where
 \begin{equation}
     \bar{\Lambda}^{(k,l)}_{t} = \begin{cases}
         {\Lambda}_{t}^{(k,l)} \quad &\text{ if } {(k,l)} \neq {(i,j)}, \\
         y_t + (1-\bar q_t){\Lambda_t}^{(k,l)}  \quad &\text{ if } {(k,l)} = {(i,j)}.
     \end{cases}
 \end{equation}
 
\subsection{Laplace approximations within PALs algorithm}
We can now collect the approximation steps and present them and present them as a recursive algorithm, which we call the Laplace approximations within Poisson Approximate Likelihoods (LawPAL) algorithm. Line \ref{algline:step1} is informed by approximation step 1, lines \label{algline:step2a}-\label{algline:step2b} by step 2, and lines \label{algline:step3a} and \label{algline:step3b} by step 3.
\begin{algorithm}[H]
\caption{LawPAL}\label{alg:Z}
\begin{algorithmic}[1]
  \Statex {\bf initialize:} $\bar{\bs \lambda}_{0} \leftarrow \bs \lambda_0:=n\bs\pi_0$
  \State {\bf for}  $t \geq 1$:
  \State \quad $\+ \Lambda_{t} \leftarrow (\bar{\bs \lambda}_{t-1} \otimes \+1_m)\circ \+K_{t,\bs \eta(\bar{\bs \lambda}_{t-1})}$ \label{algline:step1}
  \State \quad $\bar{q}_t \leftarrow \frac{1}{2}\left( \mu_q - \Lambda_t ^{(i,j)}\sigma^2_q + \sqrt{\left( \Lambda_t ^{(i,j)}\sigma^2_q  - \mu_q \right)^2 - 4y_t\sigma^2_q }\right)$ \label{algline:step2a}
  \State \quad $s^2_t \leftarrow \left( \frac{y_t}{\bar{q}_t}+ \frac{1}{\sigma_q^2}\right)^{-1}$  
  \State \quad $\ell(y_t|y_{1:t-1}) \leftarrow - \Lambda_t^{(i,j)} \bar{q}_t+ + y_t \log(\Lambda_t^{(i,j)} \bar{q}_t) - y_t! + \phi\left(\bar{q}_t\right) +\frac{1}{2}\log(s_t2\pi)$ \label{algline:step2a}
   \State \quad $\bar{\+ \Lambda}_t \leftarrow \bs \Lambda_{t} $  \label{algline:step3a}
  \State \quad $\bar{\Lambda}_t^{(i,j)} \leftarrow y_t+ (1- \bar{q}_t  )  \Lambda_{t}^{(i,j)} $  \label{algline:step3b}
  \State \quad$\bar{\bs \lambda}_t \leftarrow (\+ 1_m^\top \bar{\+ \Lambda}_t)^\top$
  \State {\bf end for}
\end{algorithmic}
\end{algorithm}
As output from algorithm \ref{alg:Z} one may take the approximate filtering distributions $ p(q_t | y_{1:t}) \approx \mathcal{N}_{[0,1]}(\bar q_t,s_t^2)$, $p(\+x_t| y_{1:t}) \approx\mathrm{Poi}(\bar{\bs \lambda}_t)$, $p(\+Z_t| y_{1:t}) \approx\mathrm{Poi}(\bar{\bs \Lambda}_t)$, which we study analytically in Section \ref{sec:theory}, and approximate marginal log-likelihood
\begin{equation}
\log p(y_{1:T}) = \sum_{t=1}^T\log p(y_t|y_{1:t-1}) \approx \sum_{t=1}^T\ell(y_t|y_{1:t-1}),
\end{equation}
which can be used to fit $\mu_q,\sigma^2_q$, and parameters of $\+K_{t,\bs \eta}$, as demonstrated through examples in Section \ref{sec:examples}.
\vspace{-0.5cm}
\section{Asymptotic filtering theory}
\label{sec:theory}
In this section we study the behavior of the approximate filtering quantities and distributions $ p(q_t | y_{1:t}) \approx \mathcal{N}_{[0,1]}(\bar q_t,s_t^2)$, $p(\+x_t| y_{1:t}) \approx\mathrm{Poi}(\bar{\bs \lambda}_t)$, $p(\+Z_t| y_{1:t}) \approx\mathrm{Poi}(\bar{\bs \Lambda}_t)$ in the large population limit on a fixed time horizon $T \in \mathbb{N}_0$. In a change of perspective, for this section we will consider the under-reporting parameters $q_{1:T} \in (0,1)^T$ as fixed but unknown and to be estimated, with \eqref{eq:qprior} defining a prior. The main result of this section, Theorem \ref{thm:1}, rigorously shows that Algorithm \ref{alg:Z} is able to asymptotically recover both the reporting probabilities and the un-observed state of the process. To make dependence on population size explicit, throughout this section we will index the model elements and filtering quantities of algorithm \ref{alg:Z} with the initial population size $n$: $\+x_{t,n}$, $\+Z_{t,n}$, $y_{t,n}$, $\+ \lambda_{t,n}$, $\+ \Lambda_{t,n}$, $\bar q_{t,n},$ and $\bar{\+ \Lambda}_{t,n}$, for $t= 1, \dots, T.$ To be technically complete we write $(\Omega_n, \mathcal{F}_n, \mathbb{P}_n)$ for the probability space underlying the distribution of $\+x_{t,n}$, $\+Z_{t,n}$, $y_{t,n}$ with initial population size $n$, denoting the overall probability space we shall work with as $(\Omega, \mathcal{F}, \mathbb{P}) \coloneqq (\prod_{n \geq 1}\Omega_n, \bigotimes_{n \geq 1}\mathcal{F}_n, \bigotimes_{n \geq 1}\mathbb{P}_n)$. We begin with an assumption.
\begin{assumption}\label{as:1}
There exists a constant $c>0$  such that for all $t \geq 1$, all vectors $\bs f_1, \bs f_2 \in \mathbb{R}^m$, and all probability vectors $\bs \eta, \bs \eta'$:
\begin{equation*}|\bs f_1^\top \+K_{t, \bs \eta}\bs f_2 - \bs f_1^\top \+K_{t, \bs \eta'}\bs f_2 | \leq c \|\bs f_1\|_\infty\|\bs f_2\|_\infty\|\bs \eta - \bs {\eta'}\|_\infty.
\end{equation*}
That is, for all $t\geq 1$, $\+K_{t, \bs \eta}$ is a continuous function of $\bs \eta$.
\end{assumption}

\paragraph{Law of large numbers.}
To analyze the large population limiting properties of the filter, we must first start with the data generating process. Define the following recursions:
\begin{align}
   \bs \nu_0 &\coloneqq \bs \pi_0, \\
    \bs N_t &\coloneqq (\bs \nu_{t-1} \otimes \+1_m)\circ \bs K_{t,\eta(\bs \nu_{t-1})}, \\
   \bs \nu_t &\coloneqq (\+ 1_m^\top \bs N_t)^\top.
\end{align}
Under assumption \ref{as:1} it can be shown \citep{whitehouse2023consistent} that $n^{-1}\+Z_{t,n} \rightarrow \bs N_t$, ${n^{-1}\+x_{t,n} \rightarrow \bs \nu_t}$, and  ${n^{-1}y_{t,n} \rightarrow q_t N_t^{(i,j)}}$ $\mathbb{P}$-almost surely, which can be interpreted as a discrete analogue of the convergence of jump-Markov processes to ODE limits \citep{kurtz1970solutions}. One can interpret $\bs \nu _t$ as the limiting behavior of the (hidden) population prevalence process. 
\paragraph{Asymptotically exact filtering.} Theorem \ref{thm:1} presents a result on the asymptotic properties of algorithm \ref{alg:Z}, in particular it shows that the LawPAL algorithm asymptotically recovers both the prevalence process \textit{and} the under-reporting rates $q_{1:T}$ in the large population limit.

\begin{theorem} \label{thm:1}
    Let assumption \ref{as:1} hold. Then for $t=1,\dots, T$ we have that $n^{-1}\bar{\bs \lambda}_{t,n}\rightarrow \bs \nu_t$, $n^{-1}\bar{\bs \Lambda}_{t,n} \rightarrow \bs N_t$, and $\bar q_{t,n} \rightarrow q_t$ $\mathbb{P}$-almost surely.
\end{theorem}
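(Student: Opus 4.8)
The plan is to argue by induction on the time index $t$, propagating the three almost-sure limits jointly. The base case $t=0$ is immediate from the initialization $\bar{\bs\lambda}_0=n\bs\pi_0$, which gives $n^{-1}\bar{\bs\lambda}_{0,n}=\bs\pi_0=\bs\nu_0$ identically. For the inductive step I would assume $n^{-1}\bar{\bs\lambda}_{t-1,n}\to\bs\nu_{t-1}$ $\mathbb{P}$-almost surely and then establish, in order, the limits of $\+\Lambda_{t,n}$, $\bar q_{t,n}$, $\bar{\+\Lambda}_{t,n}$, and $\bar{\bs\lambda}_{t,n}$, each built from the previous one together with the quoted data-generating law of large numbers $n^{-1}\+Z_{t,n}\to\bs N_t$ and $n^{-1}y_{t,n}\to q_tN_t^{(i,j)}$.

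\textbf{Prediction.} Dividing line \ref{algline:step1} of Algorithm \ref{alg:Z} by $n$ gives $n^{-1}\+\Lambda_{t,n}=(n^{-1}\bar{\bs\lambda}_{t-1,n}\otimes\+1_m)\circ\+K_{t,\bs\eta(\bar{\bs\lambda}_{t-1,n})}$. Since $\bs\eta$ is scale-invariant, $\bs\eta(\bar{\bs\lambda}_{t-1,n})=\bs\eta(n^{-1}\bar{\bs\lambda}_{t-1,n})$; because mass is conserved along the deterministic recursion (a short calculation using row-stochasticity of $\+K$ gives $\+1_m^\top\bs\nu_{t-1}=\+1_m^\top\bs\pi_0=1>0$), the map $\bs\eta$ is continuous at $\bs\nu_{t-1}$, so the induction hypothesis yields $\bs\eta(\bar{\bs\lambda}_{t-1,n})\to\bs\eta(\bs\nu_{t-1})$. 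Assumption \ref{as:1} then provides continuity of $\bs\eta\mapsto\+K_{t,\bs\eta}$, and combining these limits shows $n^{-1}\+\Lambda_{t,n}\to(\bs\nu_{t-1}\otimes\+1_m)\circ\+K_{t,\bs\eta(\bs\nu_{t-1})}=\bs N_t$.

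\textbf{Reporting rate.} This is where the real work lies. Both $\Lambda_{t,n}^{(i,j)}$ and $y_{t,n}$ diverge at rate $n$, so one cannot pass to the limit inside the closed form for $\bar q_{t,n}$: it is a difference of two diverging terms. I would instead rationalize it by multiplying through by the conjugate, using the quadratic of approximation step 2 to rewrite
\[
\bar q_{t,n}=\frac{2y_{t,n}\sigma_q^2}{\bigl(\Lambda_{t,n}^{(i,j)}\sigma_q^2-\mu_q\bigr)+\sqrt{\bigl(\Lambda_{t,n}^{(i,j)}\sigma_q^2-\mu_q\bigr)^2+4y_{t,n}\sigma_q^2}}.
\]
Dividing numerator and denominator by $n$ and invoking the prediction step together with $n^{-1}y_{t,n}\to q_tN_t^{(i,j)}$, the numerator tends to $2q_tN_t^{(i,j)}\sigma_q^2$, while each of the two denominator terms tends to $\sigma_q^2N_t^{(i,j)}$ (the $4y_{t,n}\sigma_q^2$ inside the root is $O(n)$, hence vanishes after dividing by $n^2$), giving $\bar q_{t,n}\to q_t$. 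This argument requires $N_t^{(i,j)}>0$, an observability condition ensuring the monitored transition is asymptotically active; I would state it as a standing hypothesis (it holds throughout an active outbreak) and note that when $N_t^{(i,j)}=0$ the observation is asymptotically uninformative and the claim on $\bar q_{t,n}$ must be dropped. One should also remark that, since $q_t\in(0,1)$ and $\bar q_{t,n}\to q_t$, the truncation to $[0,1]$ in \eqref{eq:qfilt} is inactive for large $n$.

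\textbf{Update and marginalization.} These are then routine. For $(k,l)\neq(i,j)$ the update leaves the entry unchanged, so $n^{-1}\bar\Lambda_{t,n}^{(k,l)}=n^{-1}\Lambda_{t,n}^{(k,l)}\to N_t^{(k,l)}$ by the prediction step, while for the monitored entry $n^{-1}\bar\Lambda_{t,n}^{(i,j)}=n^{-1}y_{t,n}+(1-\bar q_{t,n})\,n^{-1}\Lambda_{t,n}^{(i,j)}\to q_tN_t^{(i,j)}+(1-q_t)N_t^{(i,j)}=N_t^{(i,j)}$, combining the data limit with the previous two steps; hence $n^{-1}\bar{\+\Lambda}_{t,n}\to\bs N_t$. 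Finally $n^{-1}\bar{\bs\lambda}_{t,n}=(\+1_m^\top n^{-1}\bar{\+\Lambda}_{t,n})^\top\to(\+1_m^\top\bs N_t)^\top=\bs\nu_t$, closing the induction. The only genuinely delicate point is the reporting-rate step, where the cancellation of the two diverging terms must be resolved by the conjugate manipulation rather than by direct substitution; everything else is continuity plus the quoted law of large numbers.
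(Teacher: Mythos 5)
Your proof is correct and follows the paper's induction structure for the prediction and update steps, but it handles the crucial reporting-rate step by a genuinely different route. The paper does not touch the closed form for $\bar q_{t,n}$ at all: it treats $\bar q_{t,n}$ as the maximizer of the joint log-density, normalizes that objective by subtracting a constant and dividing by $n$ to obtain a function $\mathcal{C}_{t,n}(q)$ converging pointwise to $-\mathrm{KL}\left[\mathrm{Poi}(q_tN_t^{(i,j)})\,\|\,\mathrm{Poi}(qN_t^{(i,j)})\right]$, which is uniquely maximized at $q_t$, and then invokes a standard argmax-convergence argument (citing Capp\'e et al.). You instead rationalize the explicit quadratic-formula expression with its conjugate and pass to the limit directly; the algebra is right (note the paper's displayed formula has a sign typo, $-4y\sigma_q^2$ under the root where the quadratic $q^2+(\Lambda^{(i,j)}\sigma_q^2-\mu_q)q-y\sigma_q^2=0$ clearly gives $+4y\sigma_q^2$, which you have correctly). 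Your route is more elementary and self-contained, avoiding the M-estimation machinery (whose pointwise-convergence-implies-argmax-convergence step really rests on concavity of $\mathcal{C}_{t,n}$, a point the paper leaves to the citation); the paper's route is more conceptual and would survive if the maximizer had no closed form, e.g.\ for a different prior $f$. You also surface two conditions the paper leaves implicit: the observability condition $N_t^{(i,j)}>0$ (the paper needs it too, since for $N_t^{(i,j)}=0$ the limiting KL objective is identically zero and has no unique maximizer) and the positivity $\+1_m^\top\bs\nu_{t-1}>0$ needed for continuity of $\bs\eta$. Making these explicit is a genuine improvement in rigor rather than a deviation.
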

\begin{proof}
    The proof proceeds by induction on $t$. The result for time $0$ follows directly from the definition of $\bs \lambda_0 = n\bs \pi_0$, hence it is trivial that $\bs n^{-1}\bs \lambda_{0,n} \rightarrow \bs \pi_0 = \bs \nu_0.$   Now, for $t \geq 1$, for our inductive hypothesis we assume that $\bs n^{-1}\bar{\bs \lambda}_{n,t-1} \rightarrow \bs \nu_{t-1}$ $\mathbb{P}$- almost surely. By assumption \ref{as:1} and the continuous mapping theorem we then have that
    \begin{equation}
        n^{-1}\+\Lambda_{t,n} = (n^{-1}\bar{\bs \lambda}_{t-1,n} \otimes \+1_m)\circ\+K_{t,\bs \eta(\bar{\bs \lambda}_{t-1,n})} \rightarrow  (\bs \nu_{t-1} \otimes \+1_m)\circ\+K_{t,\bs \nu_{t-1}} = \bs N_t, \quad \mathbb{P}\text{-a.s.}. 
    \end{equation}
    We now proceed to show that $\bar q_{t,n} \rightarrow q_t$, which requires a more delicate argument. Firstly we write
    \begin{align}
        \bar q_{t,n} &= \arg \max_q \quad y_{t,n}\log\left(q\Lambda_{t,n}^{(i,j)}\right) - q\Lambda_{t,n}^{(i,j)} - \frac{1}{2}\left(\frac{q-\mu_q}{\sigma^2_q} \right)^2+ C_y\\
        &= \arg \max_q \quad\underbrace{n^{-1}y_{t,n}\log\left(\frac{n^{-1}q\Lambda_{t,n}^{(i,j)}}{n^{-1}q_t\Lambda_{t,n}^{(i,j)}}\right) - n^{-1}q\Lambda_{t,n}^{(i,j)}+n^{-1}q_t\Lambda_{t,n}^{(i,j)} - \frac{n^{-1}}{2}\left(\frac{q-\mu_q}{\sigma^2_q} \right)^2}_{%
    \textstyle
    { =:\mathcal{C}_{t,n}(q).}}\\
    \end{align}
    where the second line is obtained from the first by subtracting a quantity constant in $q$ and dividing by $n$. The next step is to notice that by the continuous mapping theorem we have that $\mathbb{P}\text{-a.s.}$
    \begin{align}
        \mathcal{C}_{t,n}(q) &\rightarrow q_tN_t^{(i,j)}\log\left(\frac{q\bs N_{t}^{(i,j)}}{q_t\bs N_{t}^{(i,j)}}\right)-q\bs N_{t}^{(i,j)}+q_t\bs N_{t}^{(i,j)} \\
        &= -\mathrm{KL}\left[\mathrm{Poi}\left(q_t\bs N_{t}^{(i,j)}\right)||\mathrm{Poi}\left(q\bs N_{t}^{(i,j)}\right)\right] =: C_t(q).
    \end{align}
    Hence, by properties of the Kullback Leibler divergence, we have that $\mathcal{C}_{t,n}(q)$ converges $\mathbb{P}$-a.s. to a function maximized (uniquely) at $q_t$. By a standard continuity argument, e.g. \cite[Chapter~12.1]{cappe2005inference}, we then have the maximizer of $\mathcal{C}_{t,n}(q)$ converges $\mathbb{P}$-a.s. to the maximizer of $\mathcal{C}_{t}(q)$ and hence $\mathbb{P}$-a.s. $\bar q_{t,n} \rightarrow q_t$. We can now complete the induction argument, noting that $\mathbb{P}$-a.s
    \begin{equation}
        n^{-1}\bar{\+\Lambda}_{t,n} = n^{-1}y_t + (1-\bar q_{t,n})n^{-1}\+ \Lambda_{t,n} \rightarrow q_t\+N_t + (1 - q_t)\+N_t = \+N_t,
    \end{equation}
    hence by the continuous mapping theorem $n^{-1}\bar{\bs \lambda}_{t,n} =  \left(\+1_m^\top n^{-1} 
  \bar{\+\Lambda}_{t,n}\right)^\top \rightarrow \left(\+1_m^\top \+N_t\right)^\top  = \bs \nu_t$, {$\mathbb{P}$-a.s.}, thus completing the induction.
\end{proof}
If one considers the full approximate filtering distribution $ p(q_t | y_{1:t}) \approx \mathcal{N}_{[0,1]}(\bar q_t,s_t^2)$, we can further notice that since $y_{t,n}\rightarrow \infty$ $\mathbb{P}$-a.s. implies  $s^2_t = \left( \frac{y_{t,n}}{\bar{q}_{t,n}}+ \frac{1}{\sigma_q^2}\right)^{-1} \rightarrow 0$ $\mathbb{P}$-a.s., we have that $\mathcal{N}_{[0,1]}(\bar q_t,s_t^2)$ will converge to a point mass at $q_t$.  To demonstrate this result through simulations we ran a Python implementation of the filter on data from the $SEIR$ model in Section \ref{example:SEIR} for population sizes $n=10^3,10^4,10^5$ and $T=100$. We first simulated $q_{1:T}\sim \mathcal{N}_{[0,1]}(0.5,0.1)$ which are then fixed for all values of $n$, for each population size we then simulate $y_{1:T}$ using parameters $\beta = 0.8,\rho = 0.1,\gamma = 0.2, \bs\pi_0=[0.99,0,0.1,0]$ and observation model \ref{sec:obsmod}. For each simulation corresponding to each value of $n$ we ran the LawPAL filter. Figure \ref{fig:filtering} plots the approximate filtering means and 95\% credible bands for $q_t$ against the ground truth values, demonstrating that, for each $t$, $\mathcal{N}_{[0,1]}(\bar q_t,s_t^2)$ collapses onto the true value $q_t$ as per Theorem \ref{thm:1}. In the next section we assess the ability of LawPAL approximate likelihood based inference to recover ground truth $\mu_q,\sigma^2_q$, and parameters of $\+K_{t,\bs \eta}$.
\begin{figure}
    \centering
    \includegraphics[width=1\linewidth]{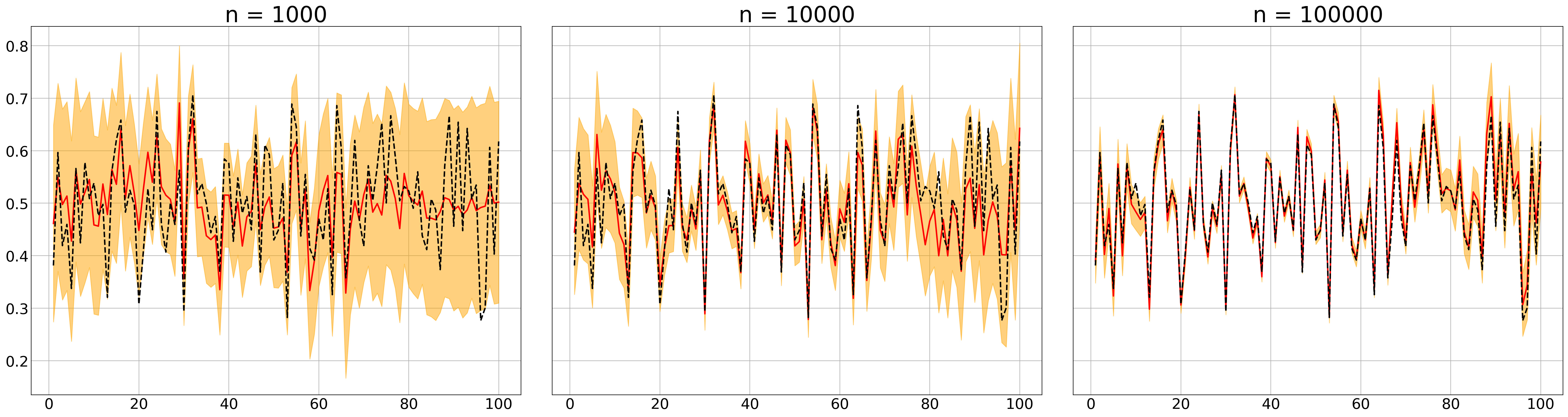}
    \caption{LawPAL filtering distributions for the $SEIR$ example, dotted lines indicate ground truth $q_{1:T}$, red lines the LawPAL filter mean, orange bands LawPAL filter 95\% credible region.}
    \label{fig:filtering}
\end{figure}
\vspace{-0.5cm}
\section{Experiments and examples} \label{sec:examples}
Code for all examples can be found at github.com/Michael-Whitehouse/LawPAL, implemented using Rcpp.
\subsection{Parameter recovery} \label{sec:paramrec}
Section \ref{sec:theory} studies the theoretical performance of the filter under a well specified parameterization. Whilst in some settings it is reasonable to make assumptions on epidemiological parameters \citep{kucharski2020early}, in practice these quantities are unknown and must be estimated from data. Here we study the behavior of the estimator obtained by maximizing the approximate likelihood of algorithm 1, as population size $n$ and time horizon $T$ \emph{both} grow large.

We consider a Susceptible-Infected-Removed $(SIR)$ type model defined by an initial distribution $\+x_0\sim \mathrm{Mult}(n,\bs\pi_0)$, for some length-$4$ probability vector $\bs\pi_0$, and transition matrix

\begin{equation}\label{eq:SIRexample}
\mathbf{K}_{t, \boldsymbol{\eta}}=\left[\begin{array}{ccc}
e^{-h \beta \eta^{(2)}} & 1-e^{-h \beta \eta^{(2)}} & 0 \\
0 &e^{-h \gamma} & 1-e^{-h \gamma}\\
0  &0 &1
\end{array}\right].
\end{equation}
Observations are associated with new infections $y_t\sim \mathrm{Binom}(Z_t^{1,2},q_t)$ where $q_t\sim\mathcal{N}_{[0,1]}(\mu_q,\sigma^2_q)$, for some $\mu_q \in [0,1], \sigma^2_q>0$. We simulate from this model using initial distribution $\bs\pi_0 = [0.995,0.005,0]$ and parameters $\beta =0.15,\gamma=0.1,\mu_q=0.5,\sigma^2_q=0.1$, repeated with varying population sizes $n=5000,10^4,10^5,10^6$ and time horizons $T=50,100,150,200$, and consider the problem of jointly estimating $\beta,\gamma,\mu_q,\sigma^2_q$ with $\bs\pi_0$ assumed known.
Approximate maximum likelihood estimates are obtained by maximizing $\ell(y_{1:T})$ jointly with respect to $\beta,\gamma,\mu_q,\sigma^2_q$ using a simple coordinate ascent algorithm, detailed in the associated github repository. For each pair of $n$ and $T$ we produce $100$ trajectories $y_{1:T}$ and $100$ corresponding approximate maximum likelihood estimators of $\beta,\gamma,\mu_q,\sigma^2_q$, summaries of these estimates are presented in tables \ref{tab:consistency}(a)-(d).

\begin{table}[htb]
    \centering
    \begin{subtable}[b]{0.495\textwidth} 
        \centering
        {\scriptsize
        \begin{tabular}{c | c c c c} 
        n &$5000$&$10^4$&$10^5$&$10^6$  \\
            \hline
           $\beta$ & $0.165(.040)$ & $0.155(.032)$ & $0.154(.019)$ & $0.153(.020)$ \\ \hline
            $\gamma$ & $0.117(.043)$ & $0.106(.035)$ & $0.104(0.019)$ & $0.103(.021)$\\ \hline
            $\mu_q$ & $0.484(.080)$ & $0.480(.077)$ & $0.493(.088)$ & $0.503(.091)$\\ \hline
            $\sigma^2_q$ &$0.092(.106)$ & $0.097(.066)$ & $0.096(.040)$ & $0.097(.039)$ \\ \hline
        \end{tabular}
        \caption{\footnotesize $T=50$}
        }
    \end{subtable}
    \hfill
    \begin{subtable}[b]{0.495\textwidth}
        \centering
        {\scriptsize
        \begin{tabular}{c | c c c c}
        n &$5000$&$10^4$&$10^5$&$10^6$  \\
            \hline
            $\beta$ &  $0.144(.030)$ & $0.147(.017)$ & $0.149(.006)$ & $0.150(.003)$  \\ \hline
            $\gamma$ & $0.094(0.30)$ & $0.99(.019)$ & $0.99(.006)$ & $0.010(.004)$ \\ \hline
            $\mu_q$ & $0.493(.069)$ & $0.507(.052)$ & $0.499(.022)$ & $0.498(.020)$ \\ \hline
            $\sigma^2_q$ &$0.105(.066)$ & $0.104(.036)$ & $0.099(.021)$ & $0.098(.014)$ \\ \hline
        \end{tabular}
        \caption{\footnotesize $T=100$}
        }
    \end{subtable}

    \vspace{0.1cm}

    \begin{subtable}[b]{0.495\textwidth}
        \centering
        {\scriptsize
        \begin{tabular}{c | c c c c}
        n &$5000$&$10^4$&$10^5$&$10^6$  \\
            \hline
            $\beta$ & $0.151(.021)$ & $0.147(.016)$ & $0.150(.005)$ & $0.150(.002)$  \\ \hline
            $\gamma$ & $0.102(.024)$ & $0.097(0.018)$ & $0.099(.006)$ & $0.010(.002)$ \\ \hline
            $\mu_q$ & $0.510(.064)$ & $0.500(.049)$ & $0.498(.017)$ & $0.501(.009)$ \\ \hline
            $\sigma^2_q$ &$0.107(.065)$ & $0.101(.036)$ & $0.099(.013)$ & $0.099(.011)$\\ \hline
        \end{tabular}
        \caption{\footnotesize $T=150$}
        }
    \end{subtable}
    \hfill
    \begin{subtable}[b]{0.495\textwidth}
        \centering
        {\scriptsize
        \begin{tabular}{c | c c c c}
        n &$5000$&$10^4$&$10^5$&$10^6$  \\
            \hline
            $\beta$ & $0.151(.017)$ & $0.151(.015)$ & $0.149(.004)$ & $0.150(.001)$  \\ \hline
            $\gamma$ & $0.102(.021)$ & $0.103(.018)$ & $0.100(.004)$ & $0.100(0.001)$ \\ \hline
             $\mu_q$ & $0.508(.064)$ & $0.511(.051)$ & $0.500(.013)$ & $0.500(.008)$ \\ \hline
            $\sigma^2_q$ &$0.112(.067)$ & $0.106(.040)$ & $0.101(.015)$ & $0.100(.010)$\\ \hline
        \end{tabular}
        \caption{\footnotesize $T=200$}
        }
    \end{subtable}
    \caption{Approximate maximum likelihood estimate sample means and standard deviations.}
    \label{tab:consistency}
\end{table}
It is clear that as $n$ and $T$ increase the estimate samples become more concentrated around ground truth values; demonstrating that the LawPAL approximate likelihood can be used to recover ground truth parameters. In the large population $n\rightarrow \infty$ regime with a fixed time horizon $T<\infty$, a consistency result along with identifiability conditions are rigorously established in \cite{whitehouse2023consistent} for an equi-dispersed model, corresponding to an atomic distribution for the reporting probability. Tables \ref{tab:consistency}(a)-(d) provide evidence that a consistency result may be feasible for the over-dispersed auxiliary variable model in the \emph{joint} $n\rightarrow\infty$, $T\rightarrow\infty$ regime. An intuition for the need to also take $T\rightarrow\infty$ comes by considering that trajectories $y_{1:T}$ with a fixed $T$ can only contain information related to a finite sample $q_{1:T}\sim\mathcal{N}_{[0,1]}(\mu_q,\sigma^2_q)$, and hence finite information related to $\mu_q$ and $\sigma_q^2$; for more information we would need more samples and hence a larger $T$. This heuristic also explains the variance of the $\mu_q$ estimate in table \ref{tab:consistency}(a) does not decrease as $n$ increases: without $T$ also increasing, little information on $\mu_q$ is gained. A rigorous proof for such a consistency result and identifiability conditions in the joint regime would require significant development of the approach and additional tools to those used in \cite{whitehouse2023consistent}, e.g. time uniform bounds on convergence rates of the data generating process. Such a result is an enticing prospect, but out of the scope of the present paper; we leave its study to future work.

\subsection{Computational efficiency and statistical compromise} \label{sec:smccomp}
Section \ref{sec:paramrec} demonstrates the ability of the approximate maximum likelihood estimator to recover data generating parameters. In this section we characterize the statistical price paid for the computational efficiency gained when using the LawPAL instead of sequential Monte Carlo.
\subsubsection{Comparison with sequential Monte Carlo}
It is well known that SMC can be used to consistently (as the number of particles $\rightarrow\infty$) estimate the true likelihood for a hidden Markov model \citep{chopin2020introduction}, furthermore, using SMC within a pseudo marginal Metropolis Hastings sampler produces an ergodic chain targeting the true posterior distribution \citep{andrieu2010particle}. As such, SMC is often described as a `gold standard' approximation and provides us with a way to check the performance of the LawPAL algorithm, with the heuristic being if our approximations closely match those obtained with SMC, then our approach well approximates the true likelihood. 
\begin{figure}[h] 
    \centering
    \includegraphics[width=1\textwidth]{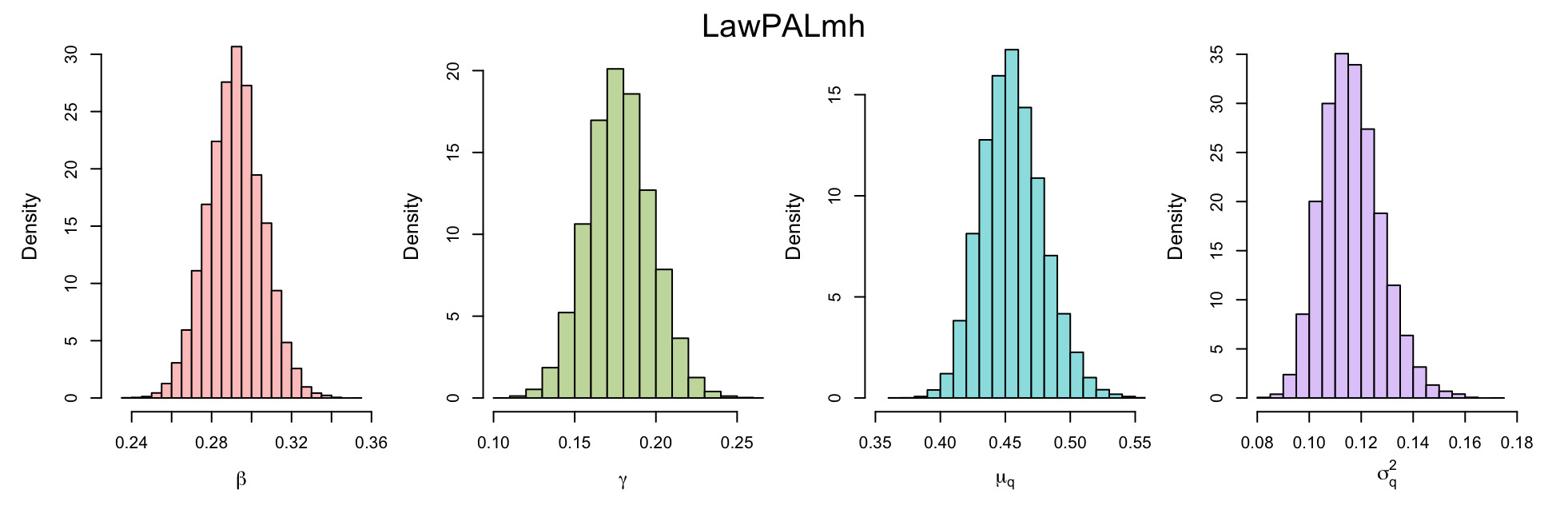}
    \vfill  
    
    \includegraphics[width=1\textwidth]{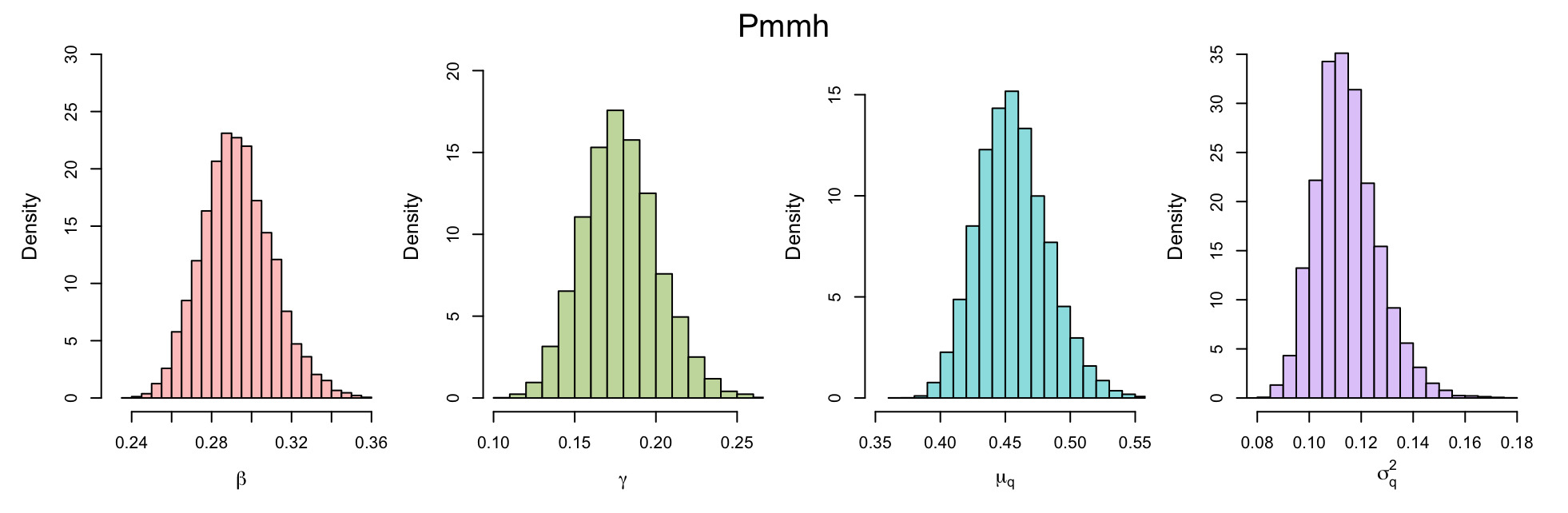}
    \caption{Metropolis Hastings plots for the $n=2.5\times10^4$, $T=50$ simulated data.} \label{tab:moderate}
\end{figure}
To this end, we simulate data  from model \ref{eq:SIRexample} with $\bs\pi_0 = [0.995,0.005,0]$ and parameters $\beta =0.3,\gamma=0.2,\mu_q=0.5,\sigma^2_q=0.1$ under 2 regimes, once with  $n=2.5\times10^4$, $T=50$ (of  order corresponding a moderately sized town) and once with $10^6$, $T=100$ (of order corresponding a large city). For each of $\beta,\gamma,\sigma_q^2$ we place a vague $\mathcal{N}_{[0,\infty]}(0,10)$ prior, for $\mu_q$ we use $\mathcal{N}_{[0,1]}(0.5,10)$. For each of the simulations we ran two Metropolis Hasting schemes, 1) with the likelihood approximated by SMC using $1000$ particles (Pmmh), 2) with the likelihood approximated by the LawPAL algorithm (LawPALmh). Priors are kept consistent for both schemes so that any discrepancies in posterior distributions must be due to discrepancies in likelihood approximations. For each chain we ran a $2000$ iteration burn in with independent Gaussian random walk proposals with variance $0.01$, then use this to construct an `optimal' \citep{andrieu2008tutorial} joint multi-variate Gaussian proposal covariance $\Sigma = (2.38^2/d)\hat\Sigma$ where $\hat\Sigma$ is the sample covariance of the burn in chain, and use this to produce a final chain of $10^5$ iterations. Each chain showed no signs of poor mixing and exhibited satisfactory decay in auto-correlation plots. The LawPALmh scheme performed $10^4$ iterations 3.8 seconds, the Pmmh scheme took 345 seconds, demonstrating a $\sim 90\times$ speed up. We further note that for the pmmh scheme specially designed proposal distributions \citep{whitehouse2023consistent} were required to reduce the marginal likelihood estimate variance to ensure adequate estimation of the likelihood ratio step in the Metropolis Hastings sampler; this extra step was not necessary for the deterministic LawPAL marginal likelihood approach.

 Figures \ref{tab:moderate} and \ref{tab:large} plot posterior histogram densities for each regime using a thinned sample of size $3\times 10^4$. Under the moderate population regime the statistical price of the approximation becomes apparent: whilst both schemes produce posterior distributions concentrated close to the data generating parameters, there are disagreements in the tails of the distributions with the pmmh posteriors exhibiting higher variance, indicating presence of slight over-confidence in LawPALmh posteriors. Under the large population regime this discrepancy becomes less pronounced, as exhibited by figure \ref{tab:large}. This can be heuristically linked Theorem 1: as $n$ grows large, the first moment $\bar q_t$ becomes more representative of the full filtering distribution as it collapses to a single point. Hence, less is sacrificed when we propagate the point value $\bar q_t$ in algorithm line 7, as opposed to the full conditional distribution.

\begin{figure}[t] 
    \centering
    \includegraphics[width=1\textwidth]{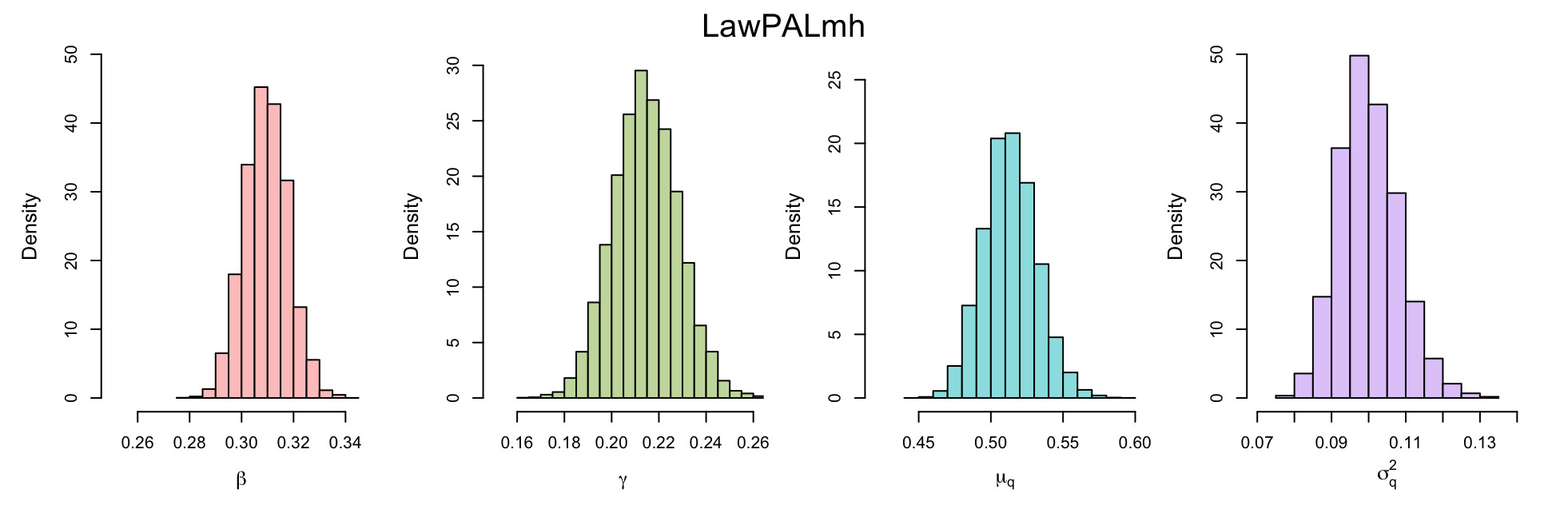}
    \vfill  
    
    \includegraphics[width=1\textwidth]{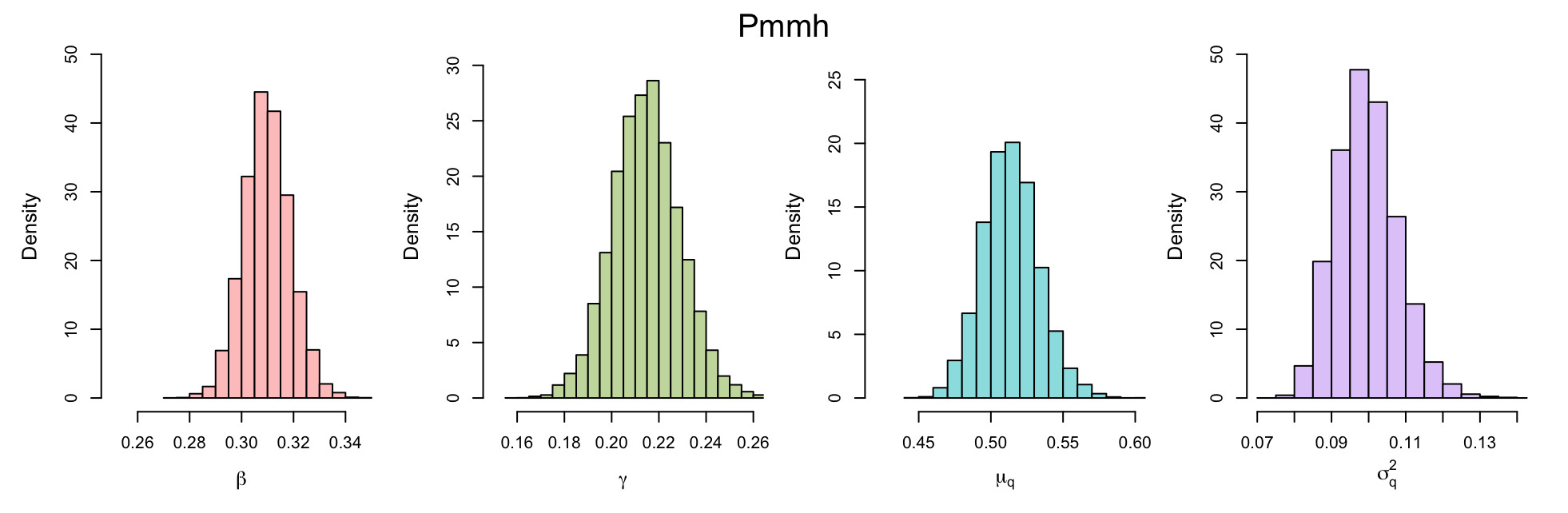}
    \caption{Metropolis Hastings plots for the $n=10^6$, $T=100$ simulated data.}
    \label{tab:large}
\end{figure}
\subsection{Bayesian inference with Stan: Covid-19 in Switzerland} \label{sec:covid}
In this section we demonstrate how the LawPAL algorithm can be used for the development of a model for real data of practical interest using `off the shelf' software. Probabilistic programming languages (PPL) have recently been recognized as a valuable tool in the development of disease transmission models \citep{grinsztajn2021bayesian, andrade2020evaluation}. In contrast to sampling based methods, the LawPAL approximation comprises only simple linear algebraic operations and is therefore naturally amenable to auto-differentiation and use within PPL libraries to draw posterior samples using an Hamiltonian Monte Carlo (HMC) scheme. HMC samplers are a type of MCMC algorithm which use auxiliary ‘momentum’ variables to aid in exploring the posterior, crucially they require access to the gradient of the likelihood function to sample Hamiltonian trajectories. 

We consider data of Covid-19 incidence reported in Switzerland in 2020. The count time-series data we consider is freely available, e.g. from \cite{grinsztajn2021bayesian}, and starts from the index case lasting 109 time steps. Our transmission model is a discrete-time stochastic counterpart to the ODE SEIR 
 model developed by \cite{grinsztajn2021bayesian}, which captures the effects of a national control measures implemented at time $t^*=23$ using a time varying transmission rate.
 \subsubsection{SEIR model with control intervention}
Cast as an instance of our latent compartmental model, the transmission process is governed by the time dependent transition matrix
 \begin{equation}\label{eq:SEIRcovid}
\mathbf{K}_{t, \boldsymbol{\eta}}=\left[\begin{array}{cccc}
e^{- \beta_t \eta^{(3)}} & 1-e^{-h \beta_t \eta^{(3)}} & 0 &0\\
0 &  e^{- \rho} & 1-e^{- \rho} &0\\
0 &  0&e^{- \gamma} & 1-e^{- \gamma}\\
0 & 0  &0 &1
\end{array}\right],
\end{equation}
with $\beta_t = \beta(\alpha +(1-\alpha)(1+\exp(b(t-t^*-d)))^{-1})$, where $\beta$ denotes the uncontrolled transmission rate, $\alpha$ is the decrease in transmission when controls are fully in place, $b$ is the slope of the decrease, and $d$ is the delay in the effect of control measures. Observations are $y_t\sim\mathrm{Binom}(Z_t^{(2,3)},q_t)$ where we consider two distinct models: an \textbf{equi-dispersed} observation model $q_t = \mu_q$ for all $t$, and an \textbf{over-dispersed} observation model $q_t \sim \mathcal{N}_{[0,1]}(\mu_q,\sigma_q^2)$, where $\mu_q\in [0,1]$ and $\sigma^2_q>0$. The population size is taken to be $n=8.57\times 10^6$ and the initial distribution probability vector is $\bs \pi_0 = n^{-1} [n-i_0-e_0,i_0,e_0,0]$ where $i_0,e_0>0$ are parameters.


\begin{table}[ht]
\centering
{\footnotesize
\begin{tabular}{c|c c c}
Parameter & Prior & over-dispersed  & equi-dispersed \\
\hline
$\beta$ & $\mathcal{N}_{[0,\infty)}(2,0.5)$ & $1.53(0.33)$ & $1.24(0.04)$ \\
$\rho$ & $\mathcal{N}_{[0,1]}(0.2,0.1)$ & $0.17(0.06)$ & $0.07(0.0007)$ \\
$\gamma$ & $\mathcal{N}_{[0,1]}(0.2,0.1)$ & $0.33(0.08)$ & $0.003(0.004)$ \\
$\alpha$ & ${Beta}(2.5,4)$ & $0.09(0.04)$ & $0.0005(0.00003)$ \\
$b$ & ${Beta}(1,1)$ & $0.24(0.24)$ & $0.013(0.001)$ \\
$d$ & $Exp(0.1)$ & $3.31(0.77)$ & $1.54(0.12)$\\
$\mu_q$ & ${Beta}(1,2)$ & $0.62(0.17)$ & $0.67(0.09)$ \\
$i_0$ & $\mathcal{N}_{[0,\infty)}(0,20)$ & $24.5(9.49)$ & $41.2(6.30)$ \\
$e_0$ & $\mathcal{N}_{[0,\infty)}(0,20)$ & $15.6(10.31)$ & $12.1(6.42)$ \\
$\sigma_q^2$ & ${Exp}(0.1)$ & $0.21(0.06)$ & NA \\
\end{tabular}
}
\caption{Parameter priors and posterior mean (standard deviation) summaries.}
\label{tab:swissparams}
\end{table}
\vspace{-0.5cm}
\subsubsection{Inference with HMC} \label{sec:hmcsampler}
The PPL Stan \citep{carpenter2017stan} provides a framework for the implementation of HMC in which a user only needs to provide priors and a likelihood function to produce posterior samples, providing a far less labor intensive approach than the pmmh scheme explored in Section \ref{sec:smccomp}.
The equi-dispersed model is fit by embedding the Poisson Approximate Likelihood (PAL) algorithm \citep{whitehouse2023consistent} in a Stan program, the over-dispersed model is similarly fit by embedding the LawPAL algorithm, implementation details can be found on the github page. For both models the parameters to be learned are $\beta,\rho,\gamma,\alpha,b,d,\mu_q,i_0,e_0$ with the over-dispersed model having the additional dispersion parameter $\sigma_q^2$. Vague priors, inspired by \cite{grinsztajn2021bayesian}, are assigned to each of these parameters and are given in table \ref{tab:swissparams}. For each model the HMC sampler was run to produce chains of length $10^5$ which showed no signs of unsatisfactory mixing. Posterior summaries are presented in table \ref{tab:swissparams}, there are significant disagreements between our two models, which we explore in the next section.
\subsubsection{Posterior predictive checks: model parsimony}

\begin{figure}[h]
    \centering
    \begin{subfigure}[b]{0.495\textwidth}
        \centering
        \includegraphics[width=\linewidth]{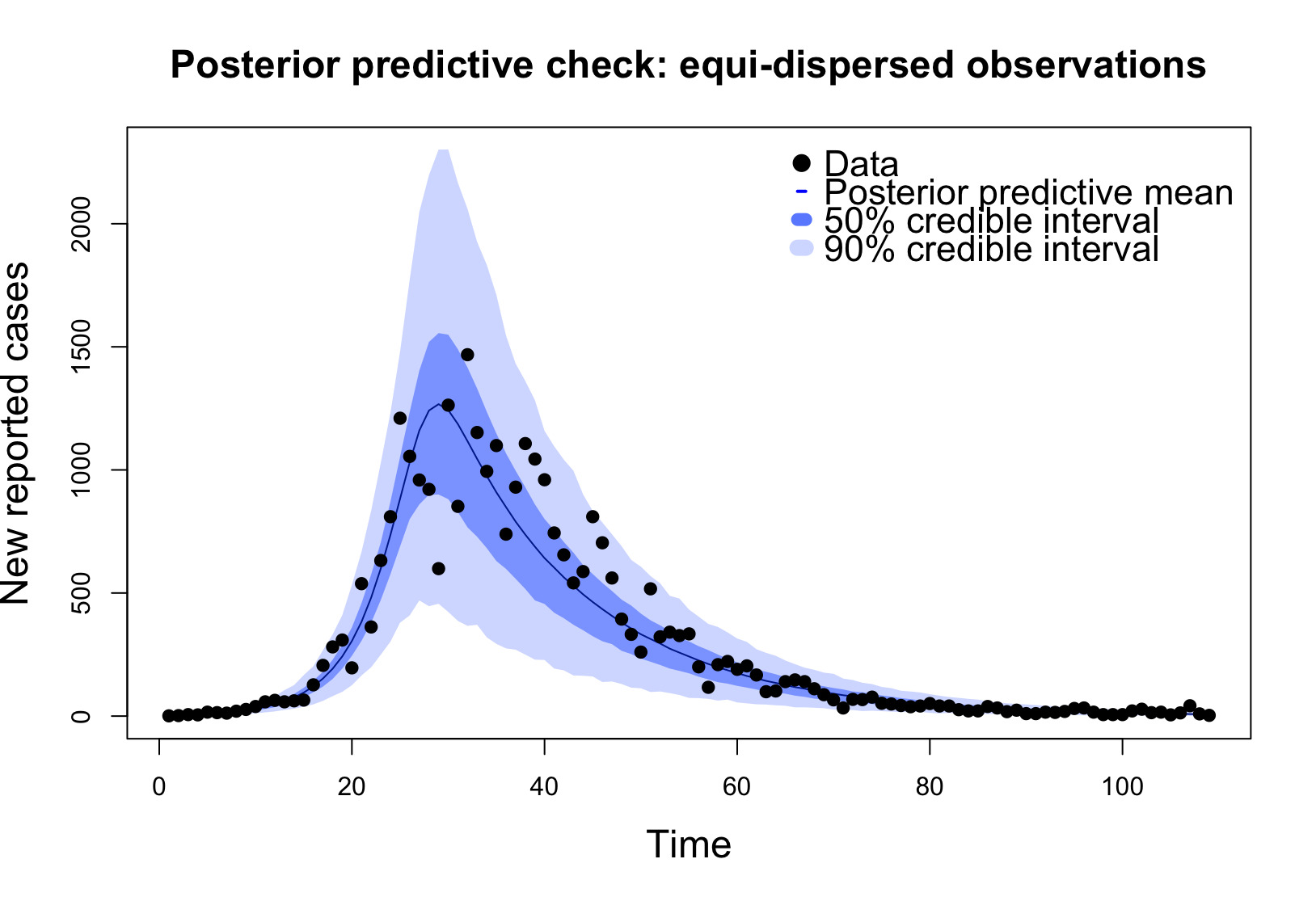}
        \label{fig:plot1}
    \end{subfigure}
    \hfill
    \begin{subfigure}[b]{0.495\textwidth}
        \centering
        \includegraphics[width=\linewidth]{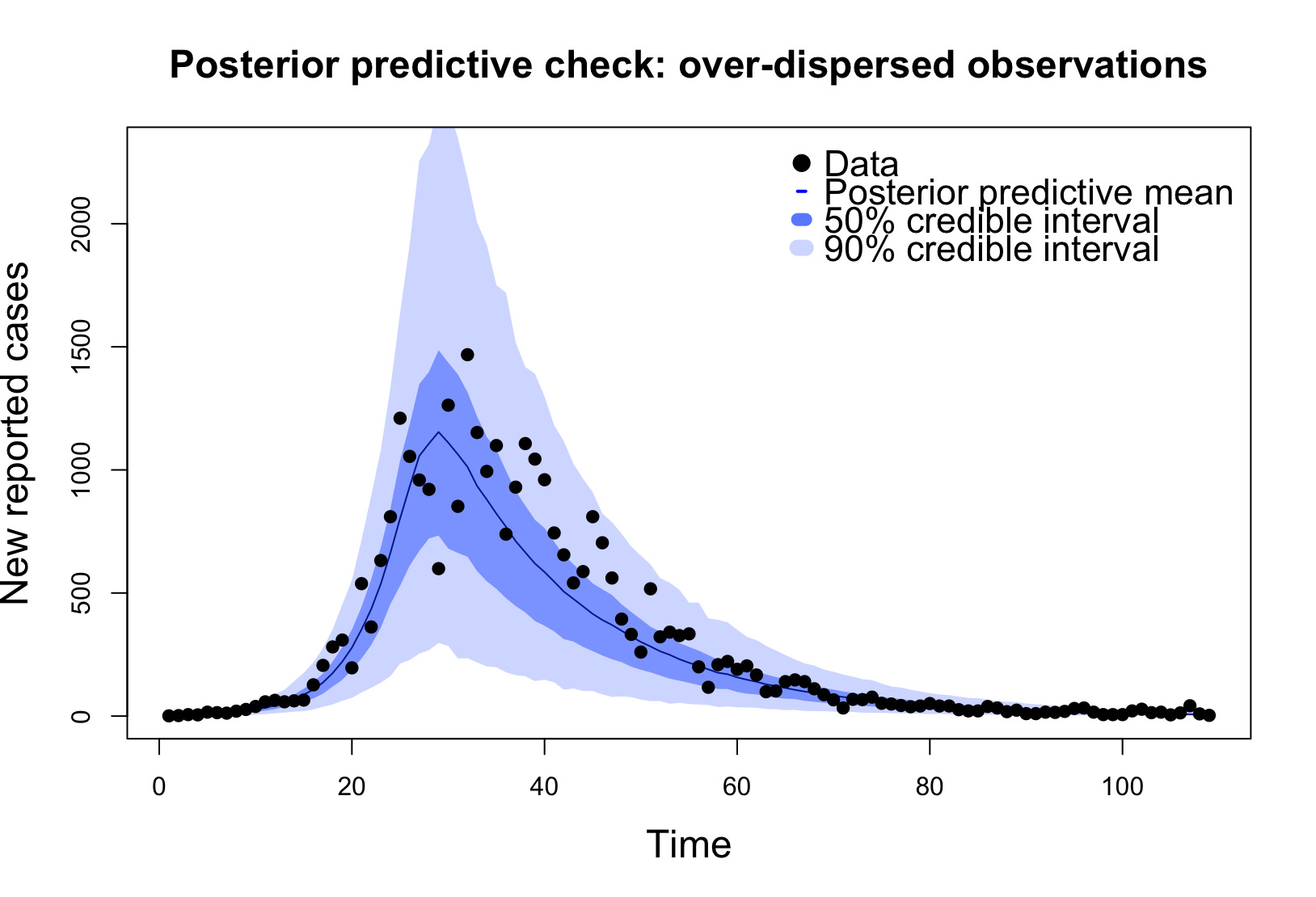}
        \label{fig:plot2}
    \end{subfigure}
    \vspace{-1.5cm}
    \caption{Graphical posterior predictive distributions with credible intervals.}
    \label{fig:postpredband}
\end{figure}

\begin{figure}[h]
    \centering
    \begin{subfigure}[b]{0.495\textwidth}
        \centering
        \includegraphics[width=\linewidth]{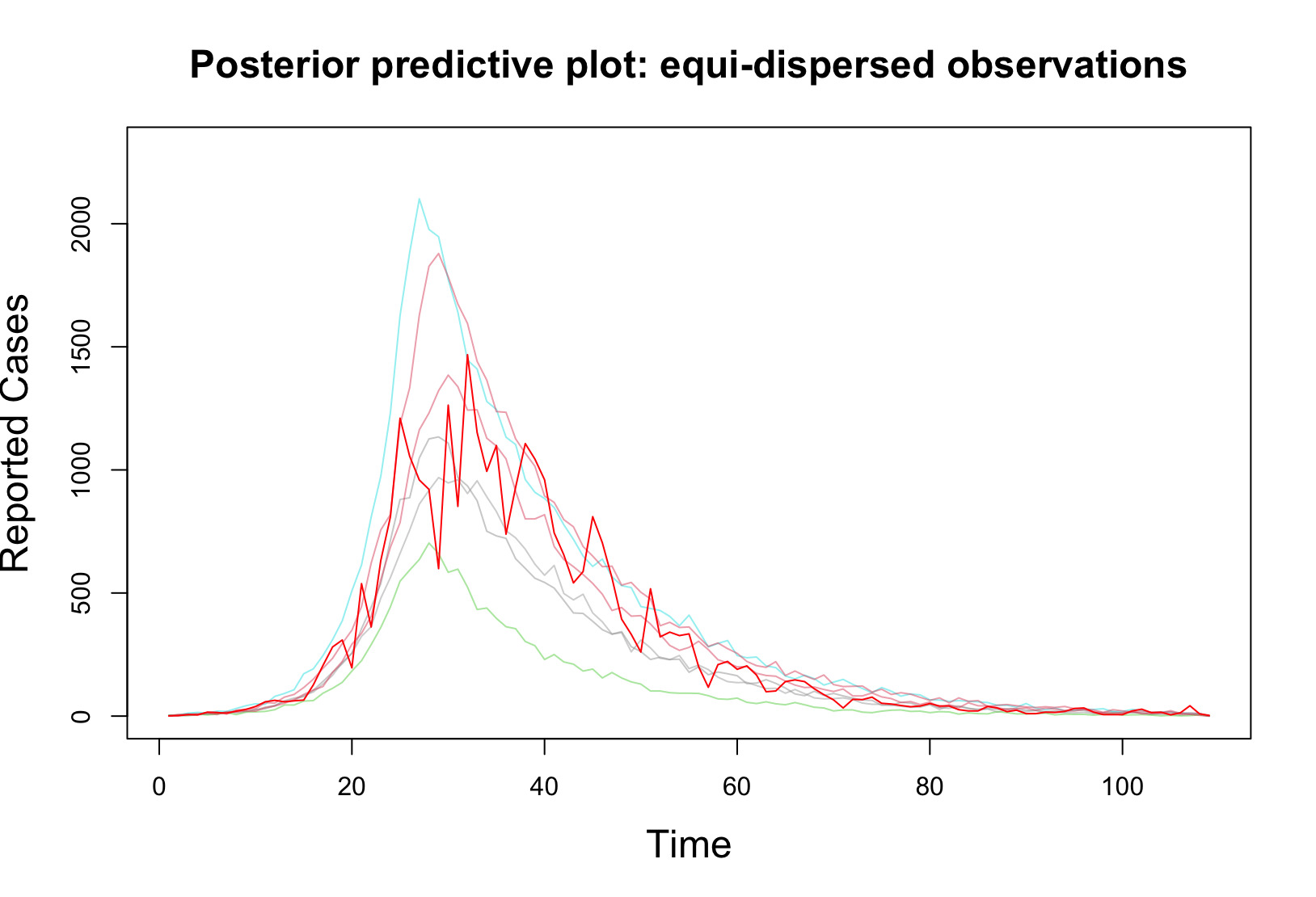}
        \label{fig:plot1}
    \end{subfigure}
    \hfill
    \begin{subfigure}[b]{0.495\textwidth}
        \centering
        \includegraphics[width=\linewidth]{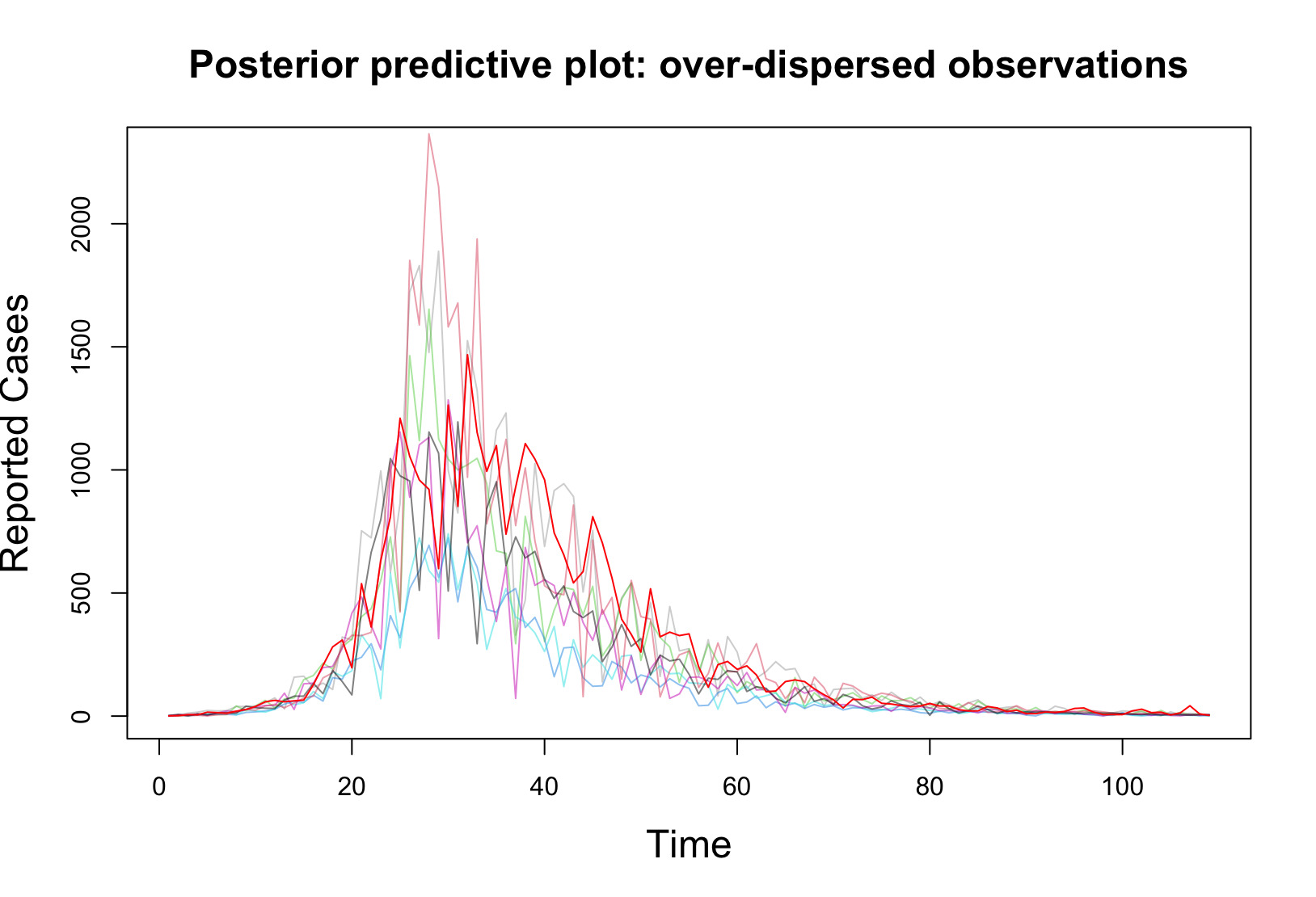}
        \label{fig:plot2}
    \end{subfigure}
    \vspace{-1.5cm}
    \caption{Graphical posterior predictive plots. Solid red line plots observed data. Coloured faint lines plot simulations from the posterior predictive distribution.}
    \label{fig:spaghetti}
\end{figure}

Graphical model checks provide a way for practitioners to look for systematic discrepancies between observations and model simulations by plotting them alongside one another and making comparisons \cite[Chapter 6.4]{gelman1995bayesian}. We now apply this philosophy to assess our two models.
Using the posterior samples obtained in Section \ref{sec:hmcsampler} we sampled from the posterior predictive distributions for the equi-dispersed and over-dispersed models. In figure \ref{fig:postpredband} we plot each of these distributions with credible intervals. Based on these plots alone both models appear reasonable, each with strong coverage of the observed data. However, these plots are effectively 2-dimensional projections of $109$ dimensional objects, summarizing only marginal distributions for each time step and obscuring between-timestep covariation. Figure \ref{fig:spaghetti} presents `spaghetti' plots of posterior predictive draws alongside the observed data. Using these graphical checks the limitations of the equi-dispersed model are exposed, with the observed data exhibiting significantly more between-timestep variation than the posterior draws. This is in contrast to the corresponding plot for the over-dispersed model, in which posterior predictive draws and the observed trajectory are indistinguishable in this regard. Furthermore the posterior samples associated with the equi-dispersed model imply an astronomically high posterior mean $R_0=\beta/\gamma \approx 8872$, which is epidemiologically absurd. The corresponding estimate for the over-dispersed model is a far more reasonable $R_0 \approx 4.4$. These issues exhibit the potential pitfalls of assuming equi-dispersed observation models, and demonstrate how the LawPAL has allowed us to build a model parsimonious with observed data using an `off the shelf' sampler, in a way that the vanilla PAL algorithm does not permit. Furthermore the exercise exhibits the need for a holistic approach to posterior evaluation and to inform model selection decisions with multiple plots and checks.

\vspace{-0.5cm}
\section{Discussion and limitations} \label{sec:discussion}
We have developed an analytically tractable ADAL procedure for stochastic epidemic models permitting observational over-dispersion, supported with theoretical results and favorable experimental performance. This extends the remit of ADAL methods to a broader class of models than previously possible and provides order of magnitude computational advantages over simulation-based sequential Monte Carlo methods. A natural extension of this work would be to tackle models with mechanistic over-dispersion \citep{ning2024systemic,whitehouse2023consistent}, e.g. by treating transmission rates as latent variables. It could be fruitful to investigate ADAL style approximations for such models, alternatively the LawPAL algorithm could be embedded within an SMC scheme. Models with increasing numbers of auxiliary variables inducing increasing levels of over-dispersion define a set of nested models, each with a hierarchical latent structure. It would be interesting to compare such models using Bayesian selection techniques which explicitly penalize model variability pertinent to hierarchical models with latent structure, such as the Watanabe-Akaike information criterion or deviance information criterion \citep{watanabe2013widely,spiegelhalter2002bayesian}. ADAL methods provide a taxonomy for accelerating posterior computations and hence could be used to develop a Bayesian workflow \citep{gelman1995bayesian} for the design and selection of models with differing extents of over-dispersion, as a Bayesian compliment to a previously explored frequentist approach \citep{stocks2020model}. The approximations developed in this work are derived from the specifics of the model and subsequently are not as general as simulation based methods such as particle filtering, highlighting a trade off between model choices and computational cost. We envisage that the use of ADAL methods in conjunction with particle filter methods has the potential to provide significant aid in scaling up procedures to large systems by reducing the dimensionality of the filtering problem, all without compromising the fidelity of the inference, as previously explored by \cite{whitehouse2023consistent} in the context of look-ahead filtering schemes \citep{rebeschini2015can}. Future work could also aim to provide a generalized recipe for approximation step 2 through the use of auto-differentiation software, further investigation would be needed to elucidate the potential gains of such an approach - taking inspiration from integrated nested Laplace approximation \citep{rue2009approximate} type approaches could be of benefit to this end.


\textbf{Funding:}  MW acknowledges funding from the MRC Centre for Global Infectious Disease Analysis (Reference No. MR/X020258/1) funded by the UK Medical Research Council.

\textbf{Declaration:} The authors report there are no competing interests to declare.


    
\vspace{-0.5cm}
\bibliographystyle{chicago}
\bibliography{bibliography}

\end{document}